\newtheorem{remark}{Remark}
\title{Revisiting fixed-point quantum search: proof of the quasi-Chebyshev lemma}
\author[1]{Guanzhong~Li}
\author[1]{Shiguang~Feng}%\email{fengshg3@mail.sysu.edu.cn}
\author[1,2,+]{Lvzhou~Li}
\address[1]{Institute of Quantum Computing and Software, School of Computer Science and Engineering, Sun Yat-sen University, Guangzhou 510006, China}
\address[2]{Quantum Science Center of Guangdong-Hong Kong-Macao Greater Bay Area (Guangdong), Shenzhen 518045, China}
\begin{abstract}
The original Grover's algorithm suffers from the souffle problem, which means that the success probability of quantum search decreases dramatically if the iteration time is too small or too large from the right time.
To overcome the souffle problem, the fixed-point quantum search with an optimal number of queries was proposed [Phys. Rev. Lett. 113, 210501 (2014)], which always finds a marked state with a high probability  when a lower bound of the proportion of marked states is given.
The fixed-point quantum search relies on a key lemma regarding the explicit formula of recursive quasi-Chebyshev polynomials, but its proof is not given explicitly.
In this work, we give a detailed proof of this lemma, thus providing a sound foundation for the correctness of the fixed-point quantum search.
This lemma may be of independent interest as well, since it expands the mathematical form of the recursive relation of Chebyshev polynomials of the first kind, and it also constitutes a key component in overcoming the souffle problem of quantum walk-based search algorithms, for example, robust quantum walk search on complete bipartite graphs [Phys. Rev. A 106, 052207 (2022)].
The lemma is also central to a recently proposed quantum algorithm named quantum phase discrimination, which has become a fundamental subroutine in quantum search on graphs [arxiv: 2504.15194].
Hopefully, more applications of the lemma will be found in the future.
% \footnote{The article has been accepted by Frontiers of Computer Science (FCS), with the DOI: {10.1007/s11704-025-50626-3}.}
\end{abstract}
\keywords{Quantum algorithm, Grover's algorithm, fixed-point quantum search, quasi-Chebyshev polynomials}
\begin{document}
\section{Introduction}
Grover’s search~\cite{Grover} is a fundamental algorithm in quantum computing, as it achieves a quadratic speedup for the unstructured search problem.
It can be easily generalized to amplitude amplification~\cite{QAA}, which has become a ubiquitous component in designing quantum algorithms.
Stimulated by its wide application, the research on quantum search itself is never-ending and has led to a variety of versions, such as the deterministic version in which a marked state is obtained with certainty when the proportion is known beforehand~\cite{QAA,arbi_phase,Long,exact}, the variable time version where the cost of each oracle query is different~\cite{Ambainis2010,ambainis_et_al2023}, and versions handling oracles with bounded-error~\cite{bounded2003,rosmanis2023quantum}.

In this paper, we revisit versions that deal with the souffle problem, i.e. the problem that the success probability of quantum search decreases dramatically if the iteration time is too small or too large from the right time.
Brassard ~\cite{souffle} pointed out this phenomenon in a figurative way:
``Quantum searching is like cooking a souffle. You put the state obtained by quantum parallelism in a `quantum oven' and let the desired answer rise slowly.
Success is almost guaranteed if you open the oven at just the right time.
But the souffle is very likely to fall—the amplitude of the correct answer drops to zero—if you open the oven too early.
Furthermore, the souffle could burn if you overcook it; strangely, the amplitude of the desired state starts shrinking after reaching its maximum.''

A straightforward solution to the souffle problem is first to estimate the proportion of the marked states using amplitude estimation~\cite{QAA},
and then perform a quantum search based on that estimation.
An alternative way is to repeat the quantum search with exponentially increasing random iteration time, until a marked state is obtained~\cite{tight1998}.
Grover himself has also studied the souffle problem and proposed the $\pi/3$ method~\cite{fixed_point2005} with a `fixed-point' property, such that the success probability monotonically increases with the iteration times.
However, the quadratic speedup is lost in this method.
In 2014, Yoder, Low, and Chuang~\cite{fixed_point2014} proposed an innovative quantum search algorithm that overcomes the souffle problem while maintaining the quadratic speedup.
The proposed algorithm achieves a slightly different `fixed-point' property:
as long as the actual proportion of marked states $\lambda^2$ is greater than a predetermined lower bound $w^2$,
a marked state is guaranteed to be found with probability greater than $1-\delta^2$,
using approximately $\ln(2/\delta)/w$ oracle queries.

To achieve the goal above, the generalized Grover's iteration $G(\alpha,\beta)$ (see Eq.~\eqref{eq:generalized_iter} below for its definition) is used, and the sequence of angles $(\alpha_k,\beta_k)_{k=1}^{l}$ is given explicitly~\cite{fixed_point2014}.
The correctness of the algorithm, i.e. the closed-form angle parameters $(\alpha_k,\beta_k)_{k=1}^{l}$, relies on the explicit formula of a recursive quasi-Chebyshev polynomial.
More specifically, it was stated in Ref.~\cite{fixed_point2014} that ``it can be shown using combinatorial arguments analogous to those in~\cite{counting} that $a_L^{(\gamma)}(x) = T_L(x/\gamma) / T_L(1/\gamma)$'', where $a_L^{(\gamma)}(x)$ is a complex, degree-$L$ polynomial that generalizes the Chebyshev polynomials.
However, it seems unclear how to prove this key formula, since Ref.~\cite{counting} only provides the combinatorial interpretation of $T_L(x)$, i.e. Chebyshev polynomials of the first kind, and the paper~\cite{fixed_point2014} does not provide any further explanation.
%~\footnote{We've sent an email to the authors but haven't received any reply.}, 

Some years later, an alternative and rigorous method (see \cite[Theorem 27]{qsvt} or \cite[Section III]{Unification}) was proposed to achieve the same goal using quantum singular value transformation (QSVT), a framework that unifies many quantum algorithms such as amplitude amplification, Hamiltonian simulation, eigenvalue filtering, and quantum linear system problems.
Roughly speaking, this method consists of two steps.
The first step concerns constructing a polynomial that approximates the sign function, and the second step concerns obtaining the sequence of angles from the coefficient of the polynomial.
However, since both steps involve much numerical approximation, it's unlikely to obtain closed-form angle parameters as in Ref.~\cite{fixed_point2014}.
A challenging task of the QSVT framework is the computation of the angle parameters with efficiency and numerical stability, and there has been a series of works on optimizing this procedure~\cite{Haah2019product,angles_MP,phase_factor}.
This also highlights the importance of the fixed-point quantum search in Ref.~\cite{fixed_point2014}, as it appears to be the only nontrivial instance of QSVT that provides closed-form angle parameters.

In this work, we give a detailed proof of the explicit formula of a recursive quasi-Chebyshev polynomial, i.e. $a_L^{(\gamma)}(x) = T_L(x/\gamma) / T_L(1/\gamma)$,
which is fundamental to the closed-form angle parameters of the fixed-point quantum search~\cite{fixed_point2014}.
% that overcomes the souffle problem while maintaining the quadratic speedup.
We restate this key formula in Lemma~\ref{lem:chebyshev}, which expands the mathematical form of the original recursive relation of Chebyshev polynomials, since $a_L^{(\gamma)}(x)$ reduces to $T_L(x)$ when $\gamma=1$.
For completeness, we also provide a proof of the correctness of the fixed-point quantum search using Lemma~\ref{lem:chebyshev}, by relating the algorithm's failure amplitude to quasi-Chebyshev polynomials.

The quasi-Chebyshev lemma is a key component in the {\it robust} quantum walk search algorithm on complete bipartite graphs proposed in Ref.~\cite{robustQW}.
% , where the lemma's proof is not given either.
The proposed quantum walk search algorithm is robust in the sense that it can find a marked vertex on an $N$-vertex complete bipartite graph with probability greater than $1-\epsilon$ as long as the number of quantum walk search steps $h \geq \ln(\frac{2}{\sqrt{\epsilon}})\sqrt{N}+1$ for any adjustable parameter $\epsilon$, without knowing the number of marked vertices or sacrificing the quadratic quantum speedup.

The quasi-Chebyshev lemma is also central to a recently proposed quantum algorithm named quantum phase discrimination (QPD)~\cite{QPD}, which solves the problem of deciding whether the eigenphase $\theta\in(-\pi,\pi]$ of a given eigenstate $\ket{\psi}$ with eigenvalue $e^{i\theta}$ is zero or not, using applications of the unitary $U$ provided as a black box.
QPD achieves optimal query complexity and the quantum circuit is simple, consisting of only one ancillary qubit and a sequence of controlled-$U$ interleaved with single qubit $Y$ rotations, whose angles are given by a simple analytical formula that stems from the quasi-Chebyshev lemma.
QPD could also become a fundamental subroutine in other quantum algorithms, as two applications to quantum search on graphs are presented, one to spatial search on graphs and the other to path-finding on graphs~\cite{QPD}.
More specifically, a new quantum approach to the spatial search problem is provided based on QPD, which can find a marked vertex with probability $\Omega(1)$ in total evolution time $ O(\frac{1}{\eta \sqrt{\varepsilon}})$ and query complexity $ O(\frac{1}{\sqrt{\varepsilon}})$, where $\eta$ is the gap between the zero and non-zero eigenvalues of the graph Laplacian and $\varepsilon$ is a lower bound on the proportion of marked vertices.
The query complexity of a path-finding quantum algorithm on a welded-tree circuit graph with $\Theta(n2^n)$ vertices~\cite{multi_electric} can also be reduced from $\tilde{O}(n^{11})$ to $\tilde{O}(n^8)$ using QPD.

The rest of this paper is organized as follows.
In Section~\ref{sec:review}, we review the fixed-point quantum search that overcomes the souffle problem, which is summarized in Theorem~\ref{thm:main}.
In Section~\ref{sec:proof_thm}, we restate the key formula of recursive quasi-Chebyshev polynomials in Lemma~\ref{lem:chebyshev}, and then prove Theorem~\ref{thm:main} using Lemma~\ref{lem:chebyshev} and their relation (Lemma~\ref{lem:phase_amp}) for completeness.
Section~\ref{sec:prooflemma1} is devoted to the proof of Lemma~\ref{lem:chebyshev}, which is the main contribution of this article.

\section{Fixed-point quantum search}\label{sec:review}
We begin by defining the generalized Grover's iteration $G(\alpha,\beta)$ which will be used later in the fixed-point quantum search.
\begin{equation}\label{eq:generalized_iter}
    G(\alpha,\beta) := S_0(\beta) S_M(\alpha),
\end{equation}
where $S_0(\beta) := e^{i\beta \ket{\psi_0}\bra{\psi_0}} = I - (1-e^{i\beta})\ket{\psi_0}\bra{\psi_0}$ multiplies a phase shift of $e^{i\beta}$ to the initial state $\ket{\psi_0}$;
% and leaves the other states unchanged; 
and $S_M(\alpha) := e^{i\alpha \Pi_M } = I - (1-e^{i\alpha})\sum_{m\in M} \ket{m}\bra{m}$ multiplies a phase shift of $e^{i\alpha}$ to all the marked basis states and leaves the other states unchanged.
The phase oracle $S_M(\alpha)$ can be implemented using twice the standard oracle $O: \ket{x}\ket{b} \mapsto \ket{x}\ket{b\oplus \delta(x\in M)}$ that flips the auxiliary qubit $\ket{b}$ when the basis state $x$ is marked~\cite{fixed_point2014}.

% \subsection{The souffle problem}

Let's have a quick review of Grover's original search algorithm, which uses the restricted iteration $G(\pi,\pi)$ with $\alpha=\beta=\pi$ in Eq.~\eqref{eq:generalized_iter}.
The initial state $\ket{\psi_0}$ is the equal-superposition of all basis states.
Each iteration $-G(\pi,\pi) =(2\ket{\psi_0}\bra{\psi_0}-I)  (I-2\sum_{m\in M} \ket{m}\bra{m}) $ can be seen as the composition of two reflections, and is thus a rotation.
The more detailed geometric interpretation is shown in Fig.~\ref{fig:2d_original}.

\begin{figure}[htbp]
    \centering
    \includegraphics[width=0.4\linewidth]{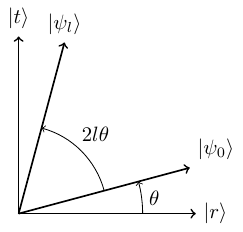}
    \caption{Geometric interpretation of Grover's search process, i.e. $\ket{\psi_l} := [-G(\pi,\pi)]^l \ket{\psi_0}$,
    in the invariant subspace $\mathcal{H}_0 = \mathrm{span} \{\ket{r}, \ket{t}\}$,
    where $\ket{t} := \Pi_{M}\ket{\psi_0}/\lambda$, $\lambda := \left\| \Pi_M \ket{\psi_0} \right\|$
    and $\ket{r} := \Pi_{M}^\bot \ket{\psi_0}/\sqrt{1-\lambda^2}$, $\Pi_{M}^\bot := I -\Pi_{M}$.
    One iteration $-G(\pi,\pi) = (2\ket{\psi_0}\bra{\psi_0}-I)  (I-2\ket{t}\bra{t})$ is interpreted as a reflection around $\ket{r}$ followed by a reflection around $\ket{\psi_0}$, and is thus a counter-clockwise rotation by $2\theta := 2\arcsin\lambda$. }
    \label{fig:2d_original}
\end{figure}

The souffle problem has a nice geometric explanation as shown in Fig.~\ref{fig:2d_original}.
The success probability (i.e. projection of $\ket{\psi_l}$ onto $\ket{t}$) decreases dramatically if the iteration time $l$ is too far from the optimal time $\lfloor (\frac{\pi}{2}-\theta)/(2\theta) \rceil = \lfloor \frac{\pi}{4\arcsin\lambda} - \frac{1}{2} \rceil$ which depends on $\lambda = \left\| \Pi_M \ket{\psi_0} \right\|$.

To solve the souffle problem, we can use the fixed-point quantum search in Ref.~\cite{fixed_point2014} such that only a lower bound $w$ of $\lambda = \left\| \Pi_M \ket{\psi_0} \right\|$ is required to be known beforehand.

\begin{theorem}[\cite{fixed_point2014}]\label{thm:main}
    For any $w\in (0,1)$ and $\delta \in (0,1)$,
    consider the following procedure:
\begin{equation}\label{eq:procedure}
\ket{\psi_l} := G(\alpha_l,\beta_l) \cdots G(\alpha_1,\beta_1) \ket{\psi_0},
\end{equation}
    where the iteration time $l \geq \ln(2/\delta)/(2w)$,
    and the sequence of parameters are set according to 
    \begin{align*}
        \alpha_k &= 2\mathrm{arccot}\left( w \tan(\frac{2k-1}{2l+1}\pi) \right), \\
        \beta_k &= -2\mathrm{arccot}\left( w \tan(\frac{2k}{2l+1}\pi) \right),
    \end{align*}
    % $\alpha_k = 2\mathrm{arccot}\left( w \tan(\frac{2k-1}{2l+1}\pi) \right)$ and $\beta_k = -2\mathrm{arccot}\left( w \tan(\frac{2k}{2l+1}\pi) \right)$
    for $k=1\sim l$.
    Then $\left\| \Pi_M \ket{\psi_l} \right\| \geq \sqrt{1-\delta^2}$
    as long as $\left\| \Pi_M \ket{\psi_0} \right\| \geq w$.
\end{theorem}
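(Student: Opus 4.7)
The plan is to restrict everything to the two-dimensional $G(\alpha,\beta)$-invariant subspace $\mathcal{H}_0 = \mathrm{span}\{\ket{r},\ket{t}\}$ identified in Fig.~\ref{fig:2d_original}, in which $\ket{\psi_0} = \sqrt{1-\lambda^2}\,\ket{r} + \lambda\,\ket{t}$. Since each $G(\alpha_k,\beta_k)$ preserves $\mathcal{H}_0$, so does the whole product in Eq.~\eqref{eq:procedure}, and the conclusion $\|\Pi_M\ket{\psi_l}\| \geq \sqrt{1-\delta^2}$ is equivalent to the ``failure amplitude'' bound $|\braket{r|\psi_l}| \leq \delta$. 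This reduces the theorem to a scalar polynomial estimate depending only on $\lambda$.

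Next I would identify this scalar amplitude with a quasi-Chebyshev polynomial. Expanding the product of the $l$ two-by-two unitaries associated with $G(\alpha_k,\beta_k)$ in $\mathcal{H}_0$ yields a polynomial in $\sqrt{1-\lambda^2}$ of degree $L:=2l+1$, and Lemma~\ref{lem:phase_amp} is precisely the statement that, for the specific trigonometric angle choice in the theorem, this polynomial equals $a_L^{(\gamma)}(\sqrt{1-\lambda^2})$ up to an irrelevant unit phase. The parameter $\gamma \in (0,1)$ is determined by the closed form of the angles; matching the factor $w$ appearing in the arguments $w\tan(\cdot)$ against the defining recursion of $a_L^{(\gamma)}$ should force the identification $\gamma = \sqrt{1-w^2}$, equivalently $1/\gamma = \cosh(\mathrm{arctanh}(w))$.

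With this identification in place, Lemma~\ref{lem:chebyshev} collapses the failure amplitude into the clean Chebyshev quotient
\begin{equation*}
    |\braket{r|\psi_l}| \;=\; \left|\frac{T_L\!\bigl(\sqrt{1-\lambda^2}/\gamma\bigr)}{T_L(1/\gamma)}\right|, \qquad L = 2l+1.
\end{equation*}
For $\lambda \geq w$ one has $\sqrt{1-\lambda^2}/\gamma = \sqrt{1-\lambda^2}/\sqrt{1-w^2} \leq 1$, so $|T_L(\sqrt{1-\lambda^2}/\gamma)| \leq 1$ by $|T_L(\cos\theta)| = |\cos(L\theta)| \leq 1$. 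The denominator is bounded below via $T_L(\cosh u) = \cosh(Lu)$ with $u = \mathrm{arctanh}(w)$: the hypothesis $l \geq \ln(2/\delta)/(2w)$ gives $Lw \geq (2l+1)w > 2lw \geq \ln(2/\delta)$, and combined with $\mathrm{arctanh}(w) \geq w$ and $\mathrm{arccosh}(1/\delta) \leq \ln(2/\delta)$ this yields $Lu \geq \mathrm{arccosh}(1/\delta)$ and hence $T_L(1/\gamma) \geq 1/\delta$. Therefore $|\braket{r|\psi_l}| \leq \delta$, as required.

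The step I expect to be hardest is pinning down the precise correspondence between the explicit formulas $\alpha_k,\beta_k$ and the recursion defining $a_L^{(\gamma)}$; the Chebyshev-like trigonometric structure strongly suggests that the angles are engineered so that the zeros of the failure polynomial in $[-\sqrt{1-w^2},\sqrt{1-w^2}]$ align with the corresponding zeros of $T_L(\,\cdot\,/\sqrt{1-w^2})$, but verifying this cleanly is exactly the content of the auxiliary Lemma~\ref{lem:phase_amp}. Once that link is in place, Lemma~\ref{lem:chebyshev} and the short hyperbolic calculation above complete the proof, and the $\ln(2/\delta)/(2w)$ bound on $l$ emerges naturally from the first-order behaviour of $\mathrm{arccosh}$ near $1$ and of $\mathrm{arctanh}$ near $0$.
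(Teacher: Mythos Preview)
Your proposal is correct and follows essentially the same route as the paper: restrict to $\mathcal{H}_0$, invoke Lemma~\ref{lem:phase_amp} to express the failure amplitude as $|a_L(\sqrt{1-\lambda^2})|$, match the angles so that $\gamma=\sqrt{1-w^2}$ and apply Lemma~\ref{lem:chebyshev}, then finish with the same hyperbolic inequalities $\mathrm{arctanh}(w)\geq w$ and $\mathrm{arccosh}(1/\delta)\leq\ln(2/\delta)$ that the paper uses. The only nuance is that Lemma~\ref{lem:phase_amp} is stated for \emph{arbitrary} $(\alpha_k,\beta_k)$, and the identification with $a_L^{(\gamma)}$ comes from the separate verification $\phi_{2k-1}=\pi-\alpha_k$, $\phi_{2k}=\beta_k+\pi$ carried out in Eq.~\eqref{eq:para_setting_2}; but you clearly anticipate exactly this step.
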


To have an intuitive understanding of the `fixed-point' property in Theorem~\ref{thm:main}, the plot of how $\left\|\Pi_M|\psi_l\rangle\right\|$ varies with $\left\|\Pi_M|\psi_0\rangle\right\|$ is shown in Fig~\ref{fig:thm1_examp}.

\begin{figure}[htbp]
    \centering
    \includegraphics[width=0.7\linewidth]{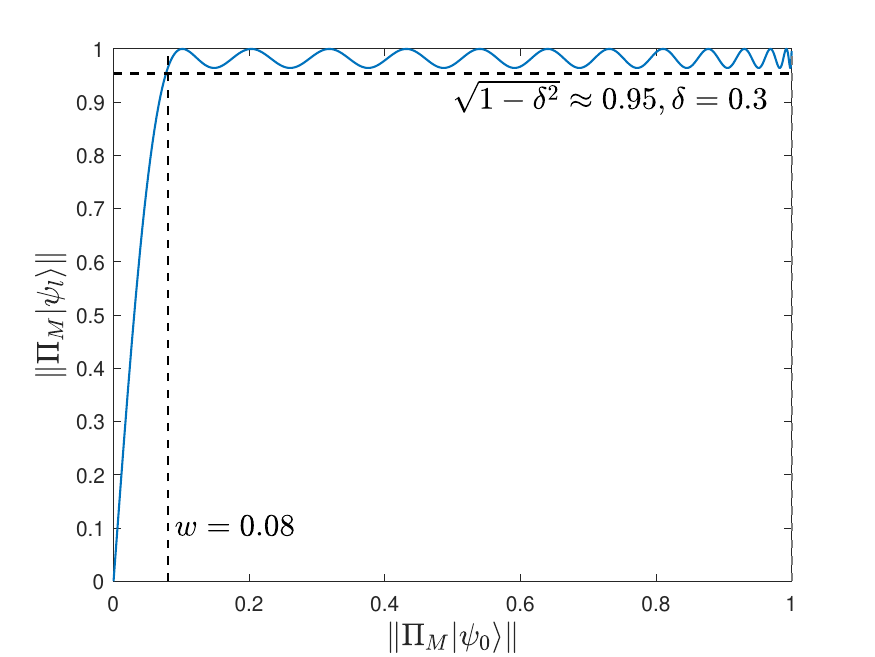}
    \caption{The plot of $P(\lambda) :=\left\|\Pi_M|\psi_l\rangle\right\|$ as a function of $\lambda =\left\|\Pi_M|\psi_0\rangle\right\|$,
    where we let $w=0.08$, $\delta=0.3$, and $l=\lceil \ln(2/\delta)/(2w) \rceil = 12$ (cf. Theorem~\ref{thm:main}).
    It shows that $P(\lambda) \geq \sqrt{1-\delta^2}\approx 0.95$ as long as $\lambda \geq w=0.08$.
    The explicit formula of $P(\lambda)$ is shown in Eq.~\eqref{eq:P_lambda}.}
    \label{fig:thm1_examp}
\end{figure}

\section{Proof of Theorem~\ref{thm:main}}\label{sec:proof_thm}

\subsection{The quasi-Chebyshev lemma}

To prove Theorem~\ref{thm:main}, we will need the following Lemma~\ref{lem:chebyshev} regarding the explicit formula of a recursive quasi-Chebyshev polynomial.

\begin{lemma}\label{lem:chebyshev}
    Suppose $\gamma \in (0,1]$.
    Consider the odd polynomial $a^\gamma_L(x)$ of degree $L=2l+1$ defined by the following recursive relation:
\begin{align}
a^\gamma_0(x) &= 1,\ a^\gamma_1(x)=x, \\
a^\gamma_{n+1}(x) &= x(1+e^{-i\phi_n}) a^\gamma_n(x) -e^{-i\phi_n} a^\gamma_{n-1}(x), \label{eq:a_L_recur}
\end{align}
    where the angles are 
\begin{equation}\label{eq:phi_n_def}
\phi_n = 2\arctan\left( \sqrt{1-\gamma^2} \tan(\frac{n}{L}\pi) \right), \ n=1\sim 2l.
\end{equation}
    Then the explicit formula of $a^\gamma_L(x)$ is
\begin{equation}
    a^\gamma_L(x) = T_L(x/\gamma) / T_L(1/\gamma),
\end{equation}
    where $T_L(x) \equiv \cos(L\arccos(x)) \equiv \cosh(L\,\mathrm{arccosh}(x))$ is the Chebyshev polynomial of the first kind.
\end{lemma}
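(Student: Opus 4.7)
My plan is to combine a short normalization argument with a root-counting reduction, and then to establish the vanishing of $a^\gamma_L$ at the roots of $T_L(x/\gamma)$ through a transfer-matrix computation. A straightforward induction on $n$ shows that $a^\gamma_n(x)$ is a polynomial of degree $n$ whose parity matches that of $n$, and that $a^\gamma_n(1) = 1$ for every $n$: assuming $a^\gamma_{n-1}(1) = a^\gamma_n(1) = 1$, the recursion at $x = 1$ yields $a^\gamma_{n+1}(1) = (1 + e^{-i\phi_n}) - e^{-i\phi_n} = 1$. The candidate polynomial $T_L(x/\gamma)/T_L(1/\gamma)$ is likewise odd of degree $L$ (since $L = 2l+1$) and equals $1$ at $x = 1$, with $L$ distinct zeros $x_k := \gamma\cos\frac{(2k-1)\pi}{2L}$, $k = 1,\dots,L$. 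Hence the lemma reduces to proving $a^\gamma_L(x_k) = 0$ for every $k$, for then two odd degree-$L$ polynomials share the same $L$ roots and agree at $x = 1$, forcing them to coincide.

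To prove the vanishing, I would rewrite the recursion as $\vec{v}_{n+1}(x) = M_n(x)\, \vec{v}_n(x)$ with $\vec{v}_n = (a^\gamma_n,\, a^\gamma_{n-1})^{\mathrm{T}}$ and
\[
M_n(x) = \begin{pmatrix} 2x\cos(\phi_n/2)\, e^{-i\phi_n/2} & -e^{-i\phi_n} \\ 1 & 0 \end{pmatrix},
\]
after substituting $1 + e^{-i\phi_n} = 2\cos(\phi_n/2)\, e^{-i\phi_n/2}$. Two features stand out: the palindromic symmetry $\phi_{L-n} = -\phi_n$ (from $\tan\frac{(L-n)\pi}{L} = -\tan\frac{n\pi}{L}$), which pairs $M_n$ with $M_{L-n}$ and forces $\det\!\bigl(M_{L-1}\cdots M_1\bigr) = 1$; and the parametrization $\tan(\phi_n/2) = \sqrt{1-\gamma^2}\,\tan\frac{n\pi}{L}$, which renders every matrix entry a function of $\gamma$ and $n\pi/L$ alone. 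Setting $\theta_k := (2k-1)\pi/(2L)$ and evaluating at $x = x_k = \gamma\cos\theta_k$, I expect the top entry of $M_{L-1}(x_k)\cdots M_1(x_k)\,\vec{v}_1(x_k)$ to collapse, via iterated product-to-sum identities, into an expression proportional to $\cos(L\theta_k) = \cos((k-1/2)\pi) = 0$.

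The main obstacle is precisely this collapse: each factor $M_n(x_k)$ individually mixes the two angles $\theta_k$ and $n\pi/L$, so the required identity must exploit all $L-1$ factors simultaneously, and it is not obvious which pairing or grouping makes the telescoping manifest. If a direct matrix computation proves unwieldy, a natural alternative is to guess a closed-form expression for $a^\gamma_n(x)$ valid at every $n \leq L$ --- for instance, a combinatorial sum over sign sequences in $\{\pm 1\}^n$ whose weights depend on $\gamma$, $x$, and $\phi_1,\dots,\phi_{n-1}$, analogous to the combinatorial interpretation of $T_L$ cited in Ref.~\cite{counting} --- and then verify the formula by a single-step induction; the specific angles $\phi_n$ prescribed by the lemma should then be exactly those for which all non-Chebyshev contributions cancel at $n = L$.
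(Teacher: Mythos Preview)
Your reduction is sound: the induction giving $a^\gamma_n(1)=1$, the parity/degree count, and the conclusion that it suffices to check vanishing at the $L$ Chebyshev nodes $x_k=\gamma\cos\frac{(2k-1)\pi}{2L}$ are all correct. But this reduction is the easy part; the entire content of the lemma lies in the vanishing $a^\gamma_L(x_k)=0$, and your proposal does not prove it. You yourself flag the obstacle: the product $M_{L-1}(x_k)\cdots M_1(x_k)\vec v_1(x_k)$ mixes the fixed angle $\theta_k$ with all the running angles $n\pi/L$, and nothing in your write-up explains which identity forces the top entry to become $\cos(L\theta_k)$. The palindrome $\phi_{L-n}=-\phi_n$ gives you $\det=1$ but says nothing about a single entry of the product. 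Absent a concrete mechanism, ``I expect \ldots\ to collapse via iterated product-to-sum identities'' is a hope, not an argument.

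Your fallback --- a combinatorial expansion over sign sequences --- is essentially what the paper does, and there the execution is substantial. The paper first clears denominators by writing $e^{-i\phi_n}=(1-it_n)/(1+it_n)$ with $t_n=\sqrt{1-\gamma^2}\tan(n\pi/L)$ and passing to $N^\gamma_n:=a^\gamma_n\prod_{k<n}(1+it_k)$, which obeys the polynomial recursion $N^\gamma_{n+1}=2xN^\gamma_n-(1-it_n)(1+it_{n-1})N^\gamma_{n-1}$. It then interprets $2N^\gamma_L$ and $2\gamma^L T_L(x/\gamma)$ as weighted tiling sums on an $L$-cycle, differing only in the domino weight ($-(1-it_n)(1+it_{n-1})$ versus $-\gamma^2$), and matches the coefficients of every power of $x$. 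The crux is an identity that is not at all obvious a priori: for any $k$-subset $\{l_m\}\subset[L]$,
\[
\sum_{j\in[L]}\ \prod_{m=1}^{k} i\tan\!\frac{(l_m+j)\pi}{L}=\delta(2\mid k)\,L,
\]
proved by a downward induction on $k$ using the tangent subtraction formula together with Vieta's relations for the roots of $\sum_k\binom{L}{k}(i\tan x)^k$. This identity is precisely the ``cancellation of non-Chebyshev contributions'' you allude to, and it does not fall out of a transfer-matrix manipulation; you would need to discover and prove it (or an equivalent) to close the gap.
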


The main contribution of this article is to provide an explicit and detailed proof of  Lemma~\ref{lem:chebyshev}.
As the argument is quite lengthy, the proof is deferred to Section~\ref{sec:prooflemma1}.
We now show that with Lemma~\ref{lem:chebyshev} in hand, one can prove Theorem~\ref{thm:main}.

\subsection{Relating failure amplitude to quasi-Chebyshev polynomials}

The core of relating Theorem~\ref{thm:main} to Lemma~\ref{lem:chebyshev} lies in the following Lemma~\ref{lem:phase_amp}, which may be of independent interest.
Specifically, the final failing amplitude $\left\| \Pi_{M}^\bot \ket{\psi_l} \right\|$ can be expressed as $\left| a_L(x) \right|$,
where $x = \left\| \Pi_{M}^\bot \ket{\psi_0} \right\|$ and $a_L(x)$ is a recursive quasi-Chebyshev polynomial,
% the absolute value of a recursive quasi-Chebyshev polynomial about the initial failing amplitude $x:= \left\| \Pi_{M}^\bot \ket{\psi_0} \right\|$,
and that the angles $\phi_n$ in the recursive relation of $a_L(x)$ are related to the angles $(\alpha_k,\beta_k)_{k=1}^l$ in a simple way:

\begin{lemma}\label{lem:phase_amp}
Let $x:= \left\| \Pi_{M}^\bot \ket{\psi_0} \right\|$ be
the length of the projection of the initial state $\ket{\psi_0}$ onto the subspace spanned by unmarked states.
Consider the final state $$\ket{\psi_l} := G(\alpha_l,\beta_l) \cdots G(\alpha_1,\beta_1) \ket{\psi_0},$$ where $G(\alpha,\beta)$ is defined by Eq.~\eqref{eq:generalized_iter}.
Then $\left\| \Pi_{M}^\bot \ket{\psi_l} \right\| = \left| a_L(x) \right|$, where the odd polynomial $a_L(x)$ of degree $L=2l+1$ is defined by the recursive relation:
$a_0(x)=1,a_1(x)=x$, and $a_{n+1}(x) = x(1+e^{-i\phi_n}) a_n(x) -e^{-i\phi_n} a_{n-1}(x)$ for $n=1\sim 2l$,
and the angles $\{\phi_n\}_{n=1}^{2l}$ are related to $(\alpha_k,\beta_k)_{k=1}^l$ by $\phi_{2k-1}=\pi-\alpha_k$ and $\phi_{2k} = \beta_k +\pi$ for $k=1\sim l$.
\end{lemma}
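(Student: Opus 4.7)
The plan is to work in the two-dimensional invariant subspace $\mathcal{H}_0 = \mathrm{span}\{\ket{r},\ket{t}\}$ containing $\ket{\psi_0} = x\ket{r} + \lambda\ket{t}$, and write $\ket{\psi_k} = A_k\ket{r} + B_k\ket{t}$. Since $\Pi_M^\bot\ket{\psi_l} = A_l\ket{r}$, it suffices to show $|A_l| = |a_L(x)|$. First I would compute the action of $G(\alpha_k,\beta_k) = S_0(\beta_k)S_M(\alpha_k)$: the factor $S_M(\alpha_k)$ is diagonal in the $\{\ket{r},\ket{t}\}$ basis, and $S_0(\beta_k) = I - (1-e^{i\beta_k})\ket{\psi_0}\bra{\psi_0}$ is a rank-one correction. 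Replacing $B_k$ by $\tilde B_k := B_k/\lambda$ and using $\lambda^2 = 1-x^2$ turns the two-term vector recurrence for $(A_k,B_k)$ into polynomial recurrences in $x$, with $\deg A_k = 2k+1$ and $\deg \tilde B_k = 2k$, so $\deg A_l = L = 2l+1$ as required.

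To match the three-term quasi-Chebyshev recurrence, which interleaves two polynomials per physical iteration, I would introduce the auxiliary quantity
\[
C_k := xA_{k-1} + (1-x^2)e^{i\alpha_k}\tilde B_{k-1} \;=\; \bra{\psi_0}\,S_M(\alpha_k)\,\ket{\psi_{k-1}},
\]
together with the phase-corrected polynomials $\hat A_k := e^{-i\sum_{j=1}^{k}\beta_j}A_k$ and $\hat C_k := e^{-i\sum_{j=1}^{k-1}\beta_j}C_k$, with the convention $\hat C_0 := 1$ (matching $\braket{\psi_0|\psi_0}$). A direct check gives $\hat A_0 = x = a_1$ and $\hat C_1 = x^2 + (1-x^2)e^{i\alpha_1} = a_2$. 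Under the prescribed identifications $\phi_{2k-1} = \pi - \alpha_k$ and $\phi_{2k} = \beta_k + \pi$ one has $e^{-i\phi_{2k-1}} = -e^{i\alpha_k}$ and $e^{-i\phi_{2k}} = -e^{-i\beta_k}$, so Lemma~\ref{lem:chebyshev}'s recurrence specializes to
\[
\hat A_k = x(1-e^{-i\beta_k})\hat C_k + e^{-i\beta_k}\hat A_{k-1}, \qquad \hat C_{k+1} = x(1-e^{i\alpha_{k+1}})\hat A_k + e^{i\alpha_{k+1}}\hat C_k.
\]

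The first of these relations is immediate from the $A_k$-recurrence and the definition of $C_k$, after dividing out the common phase factor. The second is equivalent to the algebraic identity $e^{i\beta_k}C_k = xA_k + (1-x^2)\tilde B_k$, which I expect to be the main (purely computational) obstacle: one expands the right-hand side using the physical recurrences for $A_k$ and $\tilde B_k$ and checks that the coefficients of $A_{k-1}$ and $e^{i\alpha_k}\tilde B_{k-1}$ collapse to $e^{i\beta_k}x$ and $(1-x^2)e^{i\beta_k}$ respectively, together reconstituting $e^{i\beta_k}C_k$. Once this identity is verified, induction on $k$ yields $\hat A_k = a_{2k+1}$ and $\hat C_k = a_{2k}$ for every $k\leq l$, hence $\|\Pi_M^\bot\ket{\psi_l}\| = |A_l| = |\hat A_l| = |a_L(x)|$, as claimed. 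The only genuinely subtle modeling choice is the identification of the ``second'' interleaved polynomial with the overlap $\bra{\psi_0}S_M(\alpha_k)\ket{\psi_{k-1}}$ rather than with a basis amplitude; this is forced by the asymmetric roles of $S_M$ (diagonal in $\{\ket{r},\ket{t}\}$) and $S_0$ (diagonal in $\{\ket{\psi_0},\ket{\psi_0^\bot}\}$) within a single iteration.
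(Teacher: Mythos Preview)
Your proposal is correct, and the ``main computational obstacle'' you flag is in fact a one-liner: the identity $xA_k + (1-x^2)\tilde B_k = e^{i\beta_k}C_k$ is nothing but $\langle\psi_0|\psi_k\rangle = e^{i\beta_k}\langle\psi_0|S_M(\alpha_k)|\psi_{k-1}\rangle$, which follows immediately from $\langle\psi_0|S_0(\beta_k) = e^{i\beta_k}\langle\psi_0|$. With that observation, your two interleaved recurrences are both immediate, and the induction goes through exactly as you say.

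The paper takes a different and somewhat more symmetric route. Rather than tracking the state after each full iteration $G(\alpha_k,\beta_k)$ and inserting the overlap $C_k$ by hand, it observes that in the $\{\ket{r},\ket{t}\}$ basis one has $\ket{\psi_0}=R(x)\ket{r}$ with $R(x)$ a reflection, and hence $G(\alpha,\beta) = e^{i\beta}\,A(\beta)\,A(-\alpha)$ where $A(\phi):=R(x)M_t(\phi)$ is a single $2\times2$ matrix form. This factors $\ket{\psi_l}$ (up to a global phase) as a uniform product of $2l{+}1$ matrices $A(\phi'_{2l})\cdots A(\phi'_0)\ket{r}$; the two components of each partial product give a coupled first-order recurrence which is then decoupled into the three-term recurrence for $a_n$. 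Your auxiliary $C_k$ is, in effect, the $\ket{r}$-component of the intermediate product $A(-\alpha_k)A(\beta_{k-1})\cdots A(0)\ket{r}$, recovered without ever writing down $R(x)$. The paper's factorization buys uniformity (even and odd steps look identical, no phase bookkeeping) at the cost of the initial matrix observation; your approach avoids that observation but pays for it with the phase corrections $\hat A_k,\hat C_k$ and the separate verification of the two half-steps.
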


\begin{proof}
We will restrict ourselves to the invariant subspace $\mathcal{H}_0 = \mathrm{span} \{\ket{r}, \ket{t}\}$,
where $\ket{r} = \Pi_M^\bot \ket{\psi_0}/x$ and $\ket{t} = \Pi_M \ket{\psi_0}/\sqrt{1-x^2}$.
Then the initial state $\ket{\psi_0} = x\ket{r} +\sqrt{1-x^2}\ket{t}$ in $\mathcal{H}_0$, and thus we have $\ket{\psi_0} = R(x) \ket{r}$, where
\begin{equation}\label{eq:def_R_x}
    R(x) := 
    \left(\begin{array}{cc}
        x & \sqrt{1-x^2}\\
        \sqrt{1-x^2} & -x
    \end{array}\right).
\end{equation}
Recall from Eq.~\eqref{eq:generalized_iter} that
\begin{equation}
    G(\alpha,\beta) = \left( I - (1-e^{i\beta})\ket{\psi_0}\bra{\psi_0} \right)
    \cdot \left( I - (1-e^{i\alpha})\sum_{m\in M} \ket{m}\bra{m} \right).
\end{equation}
Thus we can write the expression of $G(\alpha,\beta)$ in $\mathcal{H}_0$ as
\begin{align}
    G(\alpha,\beta)
    &= R(x)  \left( I -(1-e^{i\beta}) \ket{r} \bra{r} \right) R(x)
    \cdot \left( I-(1-e^{i\alpha})\ket{t}\bra{t} \right) \\
    &= e^{i\beta} R(x) M_t(\beta) R(x) M_t(-\alpha),
\end{align}
where
\begin{equation}
    M_t(\phi) :=
    \left(\begin{array}{cc}
        1 & 0 \\ 0 & e^{-i\phi}
    \end{array}\right).
\end{equation}
Let
\begin{align}
    A(\phi) &:= R(x) M_t(\phi) \\
    &= \left(\begin{array}{cc}
x & e^{-i\phi} \sqrt{1-x^2} \\
\sqrt{1-x^2} & -x e^{-i\phi}
\end{array}\right). \label{eq:A_phi_matrix}
\end{align}
We can then write the expression of the final state $\ket{\psi_l}\in \mathcal{H}_0$ as
\begin{align}
    \ket{\psi_l} 
    &= G(\alpha_l,\beta_l) \cdots G(\alpha_1,\beta_1) \ket{\psi_0} \\
    &= \exp(i \sum_{n=1}^{l}\beta_n)
     A(\phi_{2l}')  A(\phi_{2l-1}') \cdots
     A(\phi_{2}')  A(\phi_{1}')  A(\phi_0') \ket{r}, \label{eq:psi_l_subspace}
\end{align}
where the angles $\phi_n'$ are defined by $\phi_0' =0$, $\phi_{2k-1}' =-\alpha_k$, and $\phi_{2k}' =\beta_k$, for $k=1\sim l$.

Consider $a_n(x)$ and $b_n(x)$ for $n=1\sim L$ defined by
\begin{equation}\label{eq:a_n_def}
    [a_n(x),\ \sqrt{1-x^2} b_n(x)]^T := A(\phi_{n-1}') \cdots A(\phi_0') \ket{r}.
\end{equation}
Combining Eq.~\eqref{eq:psi_l_subspace} and Eq.~\eqref{eq:a_n_def},
we know $\left| \braket{r|\psi_l}\right| = \left| a_L(x) \right|$.
Note that $\left\| \Pi_{M}^\bot \ket{\psi_l} \right\| = \left| \braket{r|\psi_l}\right|$ in $\mathcal{H}_0$.
Thus, it remains to show that $a_L(x)$ can also be defined by the desired recursive relation shown in Lemma~\ref{lem:phase_amp}.

From the matrix expression of $A(\phi)$ (Eq.~\eqref{eq:A_phi_matrix}) and Eq.~\eqref{eq:a_n_def}, we have the following coupled recursive relation of $a_n(x)$ and $b_n(x)$ for $n=0\sim 2l$:
\begin{align}
a_{n+1}(x) &= x a_n(x) +e^{-i\phi_n'}(1-x^2)b_n(x), \label{eq:recurr_1} \\
b_{n+1}(x) &= a_n(x) -x e^{-i\phi_n'} b_n(x), \label{eq:recurr_2}
\end{align}
where $a_0(x) := 1, b_0(x) := 0$.
To decouple this recursive relation,
we first obtain $b_{n}(x) = \left(a_{n-1}(x) - x a_{n}(x)\right)/(1-x^2)$ using ``(\ref{eq:recurr_1}) $+$ (\ref{eq:recurr_2}) $\times \frac{1-x^2}{x}$'' and by replacing $n$ with $n-1$,
and then plug $b_{n}(x)$ into Eq.~\eqref{eq:recurr_1}, resulting in:
\begin{equation}\label{eq:recurr_3}
    a_{n+1}(x) = x(1-e^{-i\phi_n'}) a_n(x) +e^{-i\phi_n'} a_{n-1}(x), 
\end{equation}
for $n=1\sim 2l$, and $a_0(x) = 1, a_1(x) = x$.
We now let $\phi_n := \phi_n' +\pi$.
Then $\phi_{2k-1}=\pi-\alpha_k$ and $\phi_{2k} = \beta_k +\pi$ for $k=1\sim l$,
and $a_{n+1}(x) = x(1+e^{-i\phi_n}) a_n(x) -e^{-i\phi_n} a_{n-1}(x)$ for $n=1\sim 2l$ from Eq.~\eqref{eq:recurr_3}, which is the desired result.
\end{proof}

\subsection{Finishing the proof of Theorem~\ref{thm:main}}

Using Lemma~\ref{lem:phase_amp} and Lemma~\ref{lem:chebyshev}, we can now prove Theorem~\ref{thm:main} as follows.

Recall that the parameters $(\alpha_k,\beta_k)_{k=1}^l$ in Theorem~\ref{thm:main} are:
\begin{equation}\label{eq:para_setting}
    \begin{cases}
        \alpha_k = 2\mathrm{arccot}\left( w \tan(\frac{2k-1}{2l+1}\pi) \right),\\
        \beta_k = -2\mathrm{arccot}\left( w \tan(\frac{2k}{2l+1}\pi) \right).
    \end{cases}
\end{equation}
Using Lemma~\ref{lem:phase_amp}, we know $\left\| \Pi_{M}^\bot \ket{\psi_l} \right\| = \left| a_L(x) \right|$ where $x:= \left\| \Pi_{M}^\bot \ket{\psi_0} \right\|$,
and the odd polynomial $a_L(x)$ of degree $L=2l+1$ is defined by the recursive relation:
$a_0(x)=1,a_1(x)=x$, and $a_{n+1}(x) = x(1+e^{-i\phi_n}) a_n(x) -e^{-i\phi_n} a_{n-1}(x)$ for $n=1\sim 2l$, and the angles $\{\phi_n\}_{n=1}^{2l}$ are:
\begin{equation}\label{eq:para_setting_2}
    \begin{cases}
        \phi_{2k-1}=\pi-\alpha_k = 2\arctan\left( w\tan(\frac{2k-1}{L}\pi) \right), \\
        \phi_{2k} = \beta_k +\pi = 2\arctan\left( w\tan(\frac{2k}{L}\pi) \right),
    \end{cases}
\end{equation}
for $k=1\sim l$, where we used Eq.~\eqref{eq:para_setting} in the second equality.

Let
\begin{equation}
    \gamma :=\sqrt{1-w^2},
\end{equation}
then $\phi_n = 2\arctan\left( \sqrt{1-\gamma^2} \tan(\frac{n}{L}\pi) \right)$ for $n=1\sim 2l$ by Eq.~\eqref{eq:para_setting_2}, which coincides with Eq.~\eqref{eq:phi_n_def}.
Thus, by Lemma~\ref{lem:chebyshev} we know $a_L(x) = T_L(x/\gamma) / T_L(1/\gamma)$.

In addition, using the fact that $\left\| \Pi_M \ket{\varphi} \right\| = \sqrt{1 -\left\| \Pi_M^\bot \ket{\varphi} \right\|^2}$ for any quantum state $\ket{\varphi}$,
we obtain the (earlier mentioned, cf. Fig.~\ref{fig:thm1_examp}) explicit formula of $P(\lambda)= \left\| \Pi_M \ket{\psi_l} \right\|$ about $\lambda = \left\|\Pi_M\ket{\psi_0}\right\|$ as:
\begin{equation}\label{eq:P_lambda}
    P(\lambda) = \sqrt{1 -\frac{T_L^2(\sqrt{1-\lambda^2}/\sqrt{1-w^2})}{T_L^2(1/\sqrt{1-w^2})}}.
\end{equation}

From the assumption $l \geq \ln(2/\delta)/(2w)$ in Theorem~\ref{thm:main}, we have $L \geq {\ln(2/\delta)}/{w}$.
We now show that it implies $ 1/T_L(1/\gamma) \leq \delta$.
Using the definition $T_L(x) \equiv \cosh(L\,\mathrm{arccosh}(x))$, we have
\begin{align}
&1/T_L(1/\gamma) \leq \delta \\
\Leftrightarrow\ & \cosh\left( L\, \mathrm{arccosh}(1/\gamma) \right)  \geq 1/\delta \\
\Leftrightarrow\ & L \geq \frac{\mathrm{arccosh}(1/\delta)}{\mathrm{arccosh}(1/\gamma)}. \label{eq:L_RHS}
\end{align}
Thus, it suffices to show that the RHS of Eq.~\eqref{eq:L_RHS} is not greater than ${\ln(2/\delta)}/{w}$.
From $\gamma =\sqrt{1-w^2}$ and the definition $\mathrm{arccosh}(x) = \ln(x+\sqrt{x^2-1})$, we have
\begin{align}
\frac{\mathrm{arccosh}(1/\delta)}{\mathrm{arccosh}(1/\gamma)} &= \frac{\ln\left( \frac{1}{\delta} +\sqrt{\frac{1}{\delta^2}-1} \right)}{\ln\left( \frac{1}{\sqrt{1-w^2}} +\sqrt{\frac{1}{1-w^2}-1} \right)} \\
&= \frac{\ln\big[ \left(1+\sqrt{1-\delta^2}\right)/\delta \big]}{\ln \big[ (1+w)/\sqrt{1-w^2} \big]} \\
&\leq \frac{\ln(2/\delta)}{\ln\sqrt{\frac{1+w}{1-w}}}
\leq \frac{\ln(2/\delta)}{w},
\end{align}
where the last line follows from $\ln\left( \frac{1+w}{1-w} \right) \geq 2w$,
which can be shown by taking derivative on both sides and observing that $\frac{2}{1-w^2} \geq 2$ holds for $w\in [0,1)$.

Finally, combining $\left\| \Pi_{M}^\bot \ket{\psi_l} \right\| = \left| a_L(x) \right|$,
and $a_L(x) = \frac{T_L(x/\gamma)}{T_L(1/\gamma)}$,
and $ 1/T_L(1/\gamma) \leq \delta$,
and the fact that $|T_L(x)|\leq 1$ as long as $|x|\leq 1$,
we have: $\left\| \Pi_{M}^\bot \ket{\psi_l} \right\| \leq \delta$ as long as $|x|\leq \gamma$.
Recall that $x= \left\| \Pi_{M}^\bot \ket{\psi_0} \right\|$ and $\gamma=\sqrt{1-w^2}$.
Therefore, $\| \Pi_M \ket{\psi_l} \| \geq \sqrt{1-\delta^2}$ as long as $\| \Pi_M \ket{\psi_0} \| \geq w$, which completes the proof of Theorem~\ref{thm:main}.

\section{Proof of Lemma~\ref{lem:chebyshev}}\label{sec:prooflemma1}

We copy Lemma~\ref{lem:chebyshev} in the following for convenience.
\begin{lemma}[Copy of Lemma~\ref{lem:chebyshev}]\label{lem:chebyshev_copy}
    Suppose $\gamma \in (0,1]$.
    Consider the odd polynomial $a^\gamma_L(x)$ of degree $L=2l+1$ defined by the following recursive relation:
\begin{align}
a^\gamma_0(x) &= 1,\ a^\gamma_1(x)=x, \\
a^\gamma_{n+1}(x) &= x(1+e^{-i\phi_n}) a^\gamma_n(x) -e^{-i\phi_n} a^\gamma_{n-1}(x), \label{eq:a_L_recur_copy}\\
\phi_n &= 2\arctan\left( \sqrt{1-\gamma^2} \tan(\frac{n}{L}\pi) \right), \ n=1\sim 2l. \label{eq:phi_n_def_copy}
\end{align}
    Then $a^\gamma_L(x) = T_L(x/\gamma) / T_L(1/\gamma)$,
    where $T_L(x) \equiv \cos(L\arccos(x)) \equiv \cosh(L\,\mathrm{arccosh}(x))$ is the Chebyshev polynomial of the first kind.
\end{lemma}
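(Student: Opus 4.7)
My plan is to treat both sides of the claimed identity as polynomials of degree $L$ in $x$ and to match them through a combination of easy direct checks and a trigonometric substitution that exposes the Chebyshev structure. Two quick reductions set the stage. A short induction on $n$ shows that $a^\gamma_n(x)$ has the same parity as $n$: the base cases $a^\gamma_0=1$, $a^\gamma_1=x$ are correct, and if $a^\gamma_{n-1},a^\gamma_n$ have opposite parities, then $a^\gamma_{n+1}=x(1+e^{-i\phi_n})a^\gamma_n - e^{-i\phi_n}a^\gamma_{n-1}$ has the parity opposite to $a^\gamma_n$. So $a^\gamma_L$ is odd, matching $T_L(x/\gamma)$. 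Next, evaluating the recursion at $x=1$ yields $a^\gamma_{n+1}(1)=(1+e^{-i\phi_n})a^\gamma_n(1)-e^{-i\phi_n}a^\gamma_{n-1}(1)$, so once $a^\gamma_{n-1}(1)=a^\gamma_n(1)=1$ we get $a^\gamma_{n+1}(1)=1$. Hence $a^\gamma_L(1)=1=T_L(1/\gamma)/T_L(1/\gamma)$, anchoring the normalization.

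The heart of the argument is the substitution $x=\gamma\cos\theta$ (with $\theta\in\mathbb{C}$). Setting $\tilde a_n(\theta):=a^\gamma_n(\gamma\cos\theta)$, the recursion becomes
\begin{equation*}
\tilde a_{n+1}(\theta) = \gamma\cos\theta\,(1+e^{-i\phi_n})\,\tilde a_n(\theta) - e^{-i\phi_n}\,\tilde a_{n-1}(\theta).
\end{equation*}
I would attempt a Fourier-type ansatz $\tilde a_n(\theta)=P_n(\theta)\cos(n\theta)+Q_n(\theta)\sin(n\theta)$, where $P_n,Q_n$ are built explicitly from $\phi_1,\dots,\phi_{n-1}$. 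Writing $2\gamma\cos\theta=\gamma(e^{i\theta}+e^{-i\theta})$ and equating coefficients of $e^{\pm in\theta}$ after substitution into the recursion yields a pair of two-step recursions for the combinations $P_n\pm iQ_n$, which can be unwound telescopically. Equivalently, the recursion is governed by the transfer matrix
\begin{equation*}
T_n = \begin{pmatrix} \gamma\cos\theta\,(1+e^{-i\phi_n}) & -e^{-i\phi_n}\\ 1 & 0 \end{pmatrix},
\end{equation*}
whose eigenvalue decomposition in the variable $\theta$ gives another route to the closed form.

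At the final step $n=L=2l+1$, the specific angle choice $\phi_n=2\arctan(\sqrt{1-\gamma^2}\tan(n\pi/L))$ must force the collapse $\tilde a_L(\theta)=C\cos(L\theta)$ with $C$ independent of $\theta$. The reflection symmetry $\phi_{L-n}=-\phi_n$, which follows from $\tan((L-n)\pi/L)=-\tan(n\pi/L)$, should pair up terms in the telescoped product so that the $\sin(L\theta)$ component cancels identically. The overall constant $C$ is then pinned down by evaluating at $\theta=\arccos(1/\gamma)$ (i.e.\ $x=1$): the identity $a^\gamma_L(1)=1$ established above, together with $\cos(L\arccos(1/\gamma))=T_L(1/\gamma)$, forces $C=1/T_L(1/\gamma)$, which gives $a^\gamma_L(\gamma\cos\theta)=\cos(L\theta)/T_L(1/\gamma)=T_L(x/\gamma)/T_L(1/\gamma)$, as required.

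I expect the main obstacle to lie precisely in the cancellation step at $n=L$: showing that the tangent-parametrization of $\phi_n$ is exactly what is needed to kill the sine component. This should reduce to a nontrivial product/telescoping identity involving $\prod_n(1+e^{\pm i\phi_n})$ or $\prod_n\cos(\phi_n/2)$, exploiting the resonance between the phases $\phi_n$ and the Chebyshev node spacing $n\pi/L$. A parallel, and perhaps cleaner, route — closer in spirit to the combinatorial remark in \cite{counting} alluded to in \cite{fixed_point2014} — would be to expand $a^\gamma_L(x)$ as a weighted sum over length-$L$ domino/strip tilings with each tile carrying a weight $e^{-i\phi_n}$, then build a sign-reversing involution matching this sum to the standard combinatorial expansion of $T_L(x/\gamma)$, with the residual weights collecting exactly into the factor $1/T_L(1/\gamma)$.
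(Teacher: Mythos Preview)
Your warm-up checks (parity of $a^\gamma_n$ and $a^\gamma_L(1)=1$) are correct, but the central step---the ansatz $\tilde a_n(\theta)=P_n\cos(n\theta)+Q_n\sin(n\theta)$ with $P_n,Q_n$ independent of $\theta$---fails already at $n=2$. Using $e^{-i\phi_n}=(1-it_n)/(1+it_n)$ and $1+e^{-i\phi_n}=2/(1+it_n)$, one computes
\[
\tilde a_2(\theta)=\gamma^2(1+e^{-i\phi_1})\cos^2\theta - e^{-i\phi_1}
=\frac{\gamma^2}{1+it_1}\cos(2\theta)+\Bigl(\frac{\gamma^2}{1+it_1}-\frac{1-it_1}{1+it_1}\Bigr),
\]
and the constant term vanishes iff $\gamma^2=1-it_1$, which is false for $\gamma<1$. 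More generally, multiplication by $\cos\theta$ sends $e^{ik\theta}$ to $e^{i(k\pm1)\theta}$, so $\tilde a_n$ contains \emph{all} harmonics $e^{ik\theta}$ with $|k|\le n$ and $k\equiv n\pmod 2$, not just the top ones. Consequently, equating only the $e^{\pm in\theta}$ coefficients does not close up into a two-term recursion for $P_n\pm iQ_n$, and the hoped-for telescoping never gets off the ground. The symmetry $\phi_{L-n}=-\phi_n$ alone is not enough to kill the lower harmonics at $n=L$; that is where the real content of the lemma lives, and your transfer-matrix remark does not supply it either (the eigenvalues depend on $n$ through $\phi_n$ in a way that does not obviously commute across the product).

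The paper follows the combinatorial route you mention only in your last sentence, but carries it out in full. It first clears denominators by setting $N^\gamma_n:=a^\gamma_n\prod_{k<n}(1+it_k)$, giving the clean recursion $N^\gamma_{n+1}=2xN^\gamma_n-(1-it_n)(1+it_{n-1})N^\gamma_{n-1}$, and then interprets $2N^\gamma_L(x)$ as a weighted count of square/domino tilings of an $L$-cycle with domino weights $-(1-it_n)(1+it_{n-1})$. The identity $N^\gamma_L(x)=\gamma^LT_L(x/\gamma)$ is reduced, coefficient by coefficient in $x$, to a purely trigonometric identity: for any subset $\{l_m\}_{m=1}^k\subseteq\{0,\dots,L-1\}$,
\[
\sum_{j=0}^{L-1}\prod_{m=1}^{k} i\tan\!\Bigl(\frac{(l_m+j)\pi}{L}\Bigr)=\delta(2\mid k)\,L,
\]
proved by a short downward induction in $k$ using the addition law $\tan x-\tan y=\tan(x-y)(1+\tan x\tan y)$ together with Vieta's relations for $\sum_k\binom{L}{k}(i\tan x)^k$. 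This tangent-sum lemma is the missing ``product/telescoping identity'' you allude to; your sketch never reaches an equivalent statement, so as written the argument has a genuine gap.
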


\subsection{Reducing to Eq.~\eqref{eq:final_N}}
To show $a_L^\gamma(x) = T_L(x/\gamma) / T_L(1/\gamma)$, we will separate $a_L^\gamma(x)$ into two parts to match the RHS: $a_L^\gamma(x) = N_L^\gamma(x) / D_L(\gamma)$.
Lemma~\ref{lem:D_L} below shows that $D_L(\gamma) = \gamma^L T_L(1/\gamma)$, and thus it suffices to show $N_L^\gamma(x) = \gamma^L T_L(x/\gamma)$ (Eq.~\eqref{eq:final_N}) to prove Lemma~\ref{lem:chebyshev_copy}.

Denote $t(n) := \tan(\frac{n}{L}\pi)$, $t_n := \sqrt{1-\gamma^2} \times t(n) :=\tan{\theta_n}$.
Then $\phi_n = 2\theta_n$ by Eq.~\eqref{eq:phi_n_def_copy}.
We expand $e^{-i\phi_n}$ into expressions regarding $t_n$ as follows:
\begin{align}
e^{-i\phi_n} &= e^{-i2\theta_n} = \cos(2\theta_n) -i\sin(2\theta_n) \\
&= \frac{\cos^2\theta_n - \sin^2\theta_n}{\cos^2\theta_n + \sin^2\theta_n} - i\frac{2\cos\theta_n \sin\theta_n}{\cos^2\theta_n + \sin^2\theta_n} \\
&= \frac{1-\tan^2\theta_n}{1+\tan^2\theta_n} -i\frac{2\tan\theta_n}{1+\tan^2\theta_n} \\
&= \frac{(1-it_n)^2}{(1-it_n)(1+it_n)} \\
&= \frac{1-it_n}{1+it_n}.
\end{align}
Therefore, the recursive relation of $a^\gamma_{n}(x)$ shown by Eq.~\eqref{eq:a_L_recur_copy} can be written as
\begin{equation}\label{eq:a_n_recur_t}
a^\gamma_{n+1}(x) = \frac{2x}{1+it_n} a^\gamma_n(x) - \frac{1-it_n}{1+it_n} a^\gamma_{n-1}(x).
\end{equation}

If we let $N^\gamma_n(x) := a^\gamma_n(x)  D_n(\gamma)$, where $D_n(\gamma) := \prod_{k=0}^{n-1}(1+it_k)$ for $n=1\sim L$ and $D_0(\gamma):=1$, then Eq.~\eqref{eq:a_n_recur_t} can be written as
\begin{equation}\label{eq:N_D_n_recur}
\frac{N^\gamma_{n+1}(x)}{D_{n+1}(\gamma)} = \frac{2x}{1+it_n} \frac{N^\gamma_n(x)}{D_n(\gamma)} - \frac{1-it_n}{1+it_n}  \frac{1+it_{n-1}}{1+it_{n-1}} \frac{N^\gamma_{n-1}(x)}{D_{n-1}(\gamma)},
\end{equation}
for $n=1\sim 2l$.
Multiplying both sides of Eq.~\eqref{eq:N_D_n_recur} by $D_{n+1}(\gamma)$, we obtain the recursive definition of $N^\gamma_L(x)$ as follows
\begin{align}
N^\gamma_{n+1}(x) &= 2x N^\gamma_{n}(x) -(1-it_n)(1+it_{n-1}) N^\gamma_{n-1}(x), \label{eq:N_recur}
\end{align}
for $n=1\sim 2l$, and $N^\gamma_0(x) = 1, N^\gamma_1(x) = x$.

From Lemma~\ref{lem:D_L} shown below, we know $D_L(\gamma) = \gamma^L T_L(1/\gamma)$.
We will later show in Section~\ref{subsec:N_T_equals} that $N^\gamma_L(x)$ has the following explicit formula:
\begin{equation}\label{eq:final_N}
N^\gamma_L(x) = \gamma^L T_L(x/\gamma).
\end{equation}
Thus $a^\gamma_n(x) =N^\gamma_L(x)/D_L(\gamma) = T_L(x/\gamma) / T_L(1/\gamma)$, completing the proof of Lemma~\ref{lem:chebyshev_copy}.

\begin{lemma}\label{lem:D_L}
Suppose $L=2l+1$ is odd.
Consider the degree-$2l$ polynomial $D_L(\gamma)$ defined by $D_L(\gamma) = \prod_{n=0}^{2l}(1+it_n)$,
where $t_n = \sqrt{1-\gamma^2} \times t(n)$ and $t(n) = \tan(\frac{n}{L}\pi)$.
Then we have:
\begin{equation}\label{eq:final_D}
    D_L(\gamma) = \gamma^L T_L(1/\gamma),
\end{equation}
where $T_L(x) = \cos(L\, \arccos x)$ is the Chebyshev polynomial of the first kind.
\end{lemma}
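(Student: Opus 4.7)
The plan is to reduce the product $D_L(\gamma)$ to a sum of two $L$-th powers by exploiting the fact that the $e^{2in\pi/L}$ are exactly the $L$-th roots of unity, and then identify that sum with $\gamma^L T_L(1/\gamma)$ via the closed form $T_L(x)=\tfrac{1}{2}\bigl((x+\sqrt{x^2-1})^L+(x-\sqrt{x^2-1})^L\bigr)$ (equivalently, $T_L(\cosh\eta)=\cosh(L\eta)$).

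First I would write $s:=\sqrt{1-\gamma^2}$ and $\omega_n:=n\pi/L$, so $t_n=s\tan\omega_n$ and
\begin{equation*}
1+it_n \;=\; \frac{\cos\omega_n + is\sin\omega_n}{\cos\omega_n}.
\end{equation*}
This splits $D_L(\gamma)$ as a ratio $N/\Delta$ with $N:=\prod_{n=0}^{L-1}(\cos\omega_n+is\sin\omega_n)$ and $\Delta:=\prod_{n=0}^{L-1}\cos\omega_n$ (no denominator vanishes since $L$ is odd, so no $\omega_n$ equals $\pi/2$).

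For the numerator I would use the elementary identity
\begin{equation*}
\cos\omega+is\sin\omega \;=\; \tfrac{1}{2}\,e^{-i\omega}\bigl((1+s)e^{2i\omega}+(1-s)\bigr).
\end{equation*}
Since $\{e^{2i\omega_n}\}_{n=0}^{L-1}$ is exactly the set of $L$-th roots of unity, the factorization $z^L-1=\prod_{n}(z-e^{2i\omega_n})$ evaluated at $z=-(1-s)/(1+s)$, together with $(-1)^L=-1$ (as $L$ is odd), collapses $\prod_{n}((1+s)e^{2i\omega_n}+(1-s))$ to $(1+s)^L+(1-s)^L$. The residual phase $\prod_n e^{-i\omega_n}=e^{-i\pi(L-1)/2}=(-1)^l$ and the $L$ factors of $\tfrac12$ give $N=(-1)^l\,2^{-L}\bigl((1+s)^L+(1-s)^L\bigr)$.

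For the denominator I would use $\cos\omega_n=\sin(2\omega_n)/(2\sin\omega_n)$ together with the classical identity $\prod_{n=1}^{L-1}\sin(n\pi/L)=L/2^{L-1}$. Because $\gcd(2,L)=1$ the map $n\mapsto 2n\bmod L$ permutes $\{1,\dots,L-1\}$, and each index with $2n>L$ contributes a sign via $\sin(\theta+\pi)=-\sin\theta$; exactly $l$ such indices exist, yielding $\Delta=(-1)^l/2^{L-1}$. Forming the ratio, the $(-1)^l$ signs cancel and we obtain
\begin{equation*}
D_L(\gamma)=\tfrac{1}{2}\bigl((1+s)^L+(1-s)^L\bigr).
\end{equation*}
Factoring $\gamma^L$ out, this equals $\gamma^L T_L(1/\gamma)$ by the hyperbolic closed form of $T_L$ at $1/\gamma=\cosh\eta$ (with $s=\tanh\eta$), finishing the proof.

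The conceptual heart is the factorization of $\cos\omega+is\sin\omega$ that exposes the $L$-th roots of unity; once it is spotted, the numerator evaluation is essentially forced. I expect the main nuisance to be the sign/normalization bookkeeping in the denominator product, which requires a careful count of how many $2n\pi/L$ lie above $\pi$; a possible alternative is to compute $N$ and $\Delta$ in one stroke by recognizing $D_L(\gamma)$ directly as $\prod_n\sec\omega_n\cdot(\text{linear in }e^{\pm i\omega_n})$ and invoking a Chebyshev product formula, but the two-step route above keeps each ingredient standard.
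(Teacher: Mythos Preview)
Your proof is correct, but the route is genuinely different from the paper's. The paper first exploits the symmetry $t_{L-n}=-t_n$ to pair the factors $(1+it_n)(1-it_n)=1+(1-\gamma^2)t(n)^2$, obtaining a manifestly real product; after the substitution $\gamma\mapsto 1/\gamma$ it reduces to showing that $p(\gamma):=\gamma\prod_{n=1}^{l}(\gamma^2+(\gamma^2-1)t(n)^2)$ equals $T_L(\gamma)$, which it verifies by matching the $L$ roots of $T_L$ with those of $p$ (namely $0$ and $\pm\sin(n\pi/L)$) and checking $p(1)=T_L(1)=1$. Your argument instead rewrites each factor as $\tfrac12 e^{-i\omega_n}((1+s)e^{2i\omega_n}+(1-s))$ and then collapses the full product in one shot via $\prod_n(z-\zeta_n)=z^L-1$ over the $L$-th roots of unity, arriving directly at the closed form $\tfrac12((1+s)^L+(1-s)^L)$. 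The paper's approach is shorter and avoids the sign bookkeeping in $\prod_n\cos(n\pi/L)$, but it presupposes the zero set of $T_L$; your approach is more self-contained and yields the explicit hyperbolic expression for $D_L(\gamma)$, which makes the identification with $\gamma^L T_L(1/\gamma)$ immediate.
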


\begin{remark}
Eq.~\eqref{eq:final_D} is actually a special case of Eq.~\eqref{eq:final_N} with $x=1$.
Therefore, assuming Eq.~\eqref{eq:final_N} holds, we only need to show $D_L(\gamma) = N^\gamma_L(1)$ to prove Lemma~\ref{lem:D_L}.
The equality $D_L(\gamma) = N^\gamma_L(1)$ holds because (1) $D_n(\gamma) = \prod_{k=0}^{n-1}(1+it_k)$ satisfies the recursive relation of $N^\gamma_n(x)$ when $x=1$ (cf. Eq.~\eqref{eq:N_recur}), since $D_{n+1}(\gamma) = 2 D_n(\gamma) -(1-it_n)(1+it_{n-1}) D_{n-1}(\gamma)$, and (2) the initial terms also coincide, since $D_0(\gamma) = 1 = N^\gamma_0(1)$ and $D_1(\gamma)=1=N^\gamma_1(1)$.
\end{remark}

\begin{proof}[Proof of Lemma~\ref{lem:D_L}]
We now prove this lemma without assuming Eq.~\eqref{eq:final_N}.
Note that $t_0=0$ and $t_{L-n}=-t_n$, thus
\begin{align}
D_L(\gamma) &= \prod_{n=1}^{l} \left(1+i\sqrt{1-\gamma^2} t(n)\right) \left(1-i\sqrt{1-\gamma^2} t(n)\right) \\
&= \prod_{n=1}^{l} \left( 1+(1-\gamma^2) t(n)^2 \right). \label{eq:D_L_lin2}
\end{align}
By replacing $\gamma$ with $1/\gamma$ in Eq.~\eqref{eq:final_D}, it suffices to show the following equality:
\begin{equation}\label{eq:D_transform}
    \gamma^L D_L(1/\gamma) = T_L(\gamma),
\end{equation}
where the LHS of Eq.~\eqref{eq:D_transform} is equal to the following formula $p(\gamma)$ by Eq.~\eqref{eq:D_L_lin2}.
\begin{equation}
    p(\gamma) := \gamma \prod_{n=1}^{l}(\gamma^2+(\gamma^2-1) t(n)^2).
\end{equation}
Thus, it suffices to prove $p(\gamma) = T_L(\gamma)$.
Note that $p(1)  = 1 = T_L(1)$,
and the degree-$L$ odd polynomial $T_L(\gamma) = \cos(L\, \arccos \gamma)$ has $L$ zeros: $0$ and
\begin{align}
\pm \left\{\cos\left(\frac{n+1/2}{L}\pi\right)\right\}_{n=0}^{l-1}
&= \pm \left\{\sin\left(\frac{L/2 -(n+1/2)}{L}\pi\right)\right\}_{n=0}^{l-1} \\
&= \pm \left\{\sin\left(\frac{l-n}{L}\pi\right)\right\}_{n=0}^{l-1} \\
&= \pm \left\{\sin\left(\frac{n}{L}\pi\right)\right\}_{n=1}^{l},
\end{align}
which coincide with the zeros of $p(\gamma)$, since $\gamma = \pm \sin\left(\frac{n}{L}\pi\right)$ is the zeros of $\gamma^2+(\gamma^2-1) t(n)^2=0$.
As $p(\gamma)$ and $T_L(\gamma)$ are both degree-$L$ polynomials, $p(\gamma) = T_L(\gamma)$.
\end{proof}

\subsection{Proof of Eq.~\eqref{eq:final_N}}\label{subsec:N_T_equals}
In this subsection, we will compare the combinatorial interpretations of $2N^\gamma_L(x)$ and $2\gamma^L T_L(x/\gamma)$ to prove $2N^\gamma_L(x) =2\gamma^L T_L(x/\gamma)$, from which Eq.~\eqref{eq:final_N} holds.

\subsubsection{Combinatorial interpretation of $2N^\gamma_L(x)$}

Similar to Ref.~\cite[Theorem 4]{counting}, we will show that $2N^\gamma_L(x)$ (cf. Eq.~\eqref{eq:N_recur} for the recursive definition of $N^\gamma_L(x)$) can be regarded as counting weights of tilings on the $L$-star in Lemma~\ref{lem:combi_2Nx} below.
As a preliminary, we first introduce some combinatorial objects.

The `$L$-star' consists of $L$ positions $\braket{0,1,\cdots,L-1}$ modular $L$ equally distributed on a circle.
To form a `tiling' on the $L$-star, we need to cover all the $L$ positions using either `square' with weight $2x$ on any single position, or `domino' with weight $-(1-it_{n})(1+it_{n-1})$ on two consecutive positions $\braket{n,n-1}$.
Recall that $t_n = \sqrt{1-\gamma^2} \times t(n)$ and $t(n) = \tan(\frac{n}{L}\pi)$.
Thus $t_{n+L}=t_n$, which coincides with the fact that the positions on the $L$-star are modular $L$.
The weight of a tiling is the product of the weights of all its squares and dominos.
For example, a tiling on the $5$-star consisting of $1$ square and $2$ dominos is shown in Fig.~\ref{fig:5_star}.

\begin{figure}[htbp]
    \centering
    \includegraphics[width=0.6\linewidth]{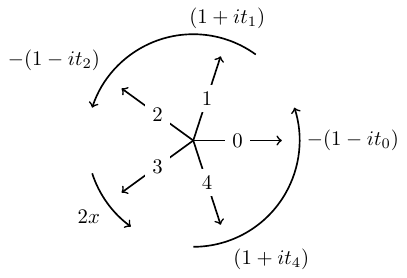}
    \caption{A tiling on the $5$-star with positions $\braket{0,1,2,3,4}$.
    It has one square with weight $2x$ on position $\braket{3}$, one domino with weight $-(1-it_2)(1+it_1)$ on positions $\braket{2,1}$, and one domino with weight $-(1-it_0)(1+it_4)$ on positions $\braket{0,4}$.
    The weight of this tiling is thus $2x(1-it_2)(1+it_1)(1-it_0)(1+it_4)$.}
    \label{fig:5_star}
\end{figure}

\begin{lemma}\label{lem:combi_2Nx}
$2N^\gamma_L(x)$ equals the sum of the weights of all the possible tilings on the $L$-star.
\end{lemma}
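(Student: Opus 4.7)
The plan is to identify $N^\gamma_n(x)$ with a weighted sum over \emph{path} tilings, decompose the $L$-star tiling sum $B_L$ by what covers position $0$, and close the argument using the reflection symmetry $t_{L-n}=-t_n$, which comes from $\tan(\pi-\theta)=-\tan\theta$.

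First I would establish, by induction on $n$, that $N^\gamma_n(x)$ equals the sum of weights of tilings of the linear path on positions $\{0,1,\ldots,n-1\}$, where a square at position $k$ carries weight $2x$ for $k\ge 1$ but weight $x$ at the distinguished position $k=0$, and a domino covering $(k,k-1)$ carries weight $-(1-it_k)(1+it_{k-1})$. Splitting a tiling of $\{0,\ldots,n\}$ according to the tile that covers the rightmost position $n$ (a square or a domino $(n,n-1)$) yields exactly the recursion~\eqref{eq:N_recur}, and the base cases $N^\gamma_0=1$ and $N^\gamma_1=x$ correspond to the empty tiling and the single ``half-weight'' square at position $0$.

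Next I would apply the analogous decomposition to both $B_L$ and to $2N^\gamma_L(x)=2P'_L$, where $P'_L$ denotes the path sum with the modified weight at position~$0$. For $B_L$, the three possible tiles at position $0$ (a square; a domino $(0,1)$; a wrap-around domino $(L-1,0)$) produce, using $t_0=0$, the contributions
\[
2x\,P_{\{1,\ldots,L-1\}},\quad -(1-it_1)\,P_{\{2,\ldots,L-1\}},\quad -(1+it_{L-1})\,P_{\{1,\ldots,L-2\}},
\]
where $P_S$ denotes the path-tiling sum over index set $S$ with uniform square weight $2x$. For $2N^\gamma_L(x)$ the path has no wrap-around, so the same decomposition gives only $2x\,P_{\{1,\ldots,L-1\}}-2(1-it_1)\,P_{\{2,\ldots,L-1\}}$. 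Hence the identity $B_L=2N^\gamma_L(x)$ reduces to the single equation $(1-it_1)\,P_{\{2,\ldots,L-1\}}=(1+it_{L-1})\,P_{\{1,\ldots,L-2\}}$.

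Finally, the reflection $k\mapsto L-k$ is a bijection between $\{2,\ldots,L-1\}$ and $\{1,\ldots,L-2\}$, and by $t_{L-k}=-t_k$ a domino at $(k,k-1)$ with weight $-(1-it_k)(1+it_{k-1})$ is sent to a domino at $(L-k,L-k-1)$ with the same weight, giving a weight-preserving bijection between the two path-tiling sets. The scalar prefactors agree too, since $t_{L-1}=-t_1$ forces $1+it_{L-1}=1-it_1$. The main obstacle I expect is the bookkeeping in the first step: the asymmetric ``half-weight'' convention at position $0$ is non-standard, but it is precisely what is needed to make the path and star decompositions run in parallel and to leave only the two symmetric path sums to reconcile.
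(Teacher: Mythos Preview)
Your proposal is correct and mirrors the paper's own argument: interpret $N^\gamma_L$ as a weighted path-tiling sum with a half-weight square at position~$0$ (the paper's Lemma~\ref{lem:combi_Nx}), split the star sum by the tile at position~$0$, and handle the wrap-around domino via the reflection $k\mapsto L-k$ together with $t_{L-k}=-t_k$; the paper packages this last step as an explicit bijection $g$ on tilings rather than as an equality of path sums, but it is the same mechanism. One small indexing slip to fix: under $k\mapsto L-k$ the domino on $\{k,k-1\}$ lands on $\{L-k,L-k+1\}$ (i.e.\ at $\langle L-k+1,\,L-k\rangle$), not on $\{L-k,L-k-1\}$, and it is for that correct image that the weight is indeed preserved.
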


We first consider the combinatorial interpretation of $N^\gamma_L(x)$ and have the following Lemma~\ref{lem:combi_Nx}.

\begin{lemma}\label{lem:combi_Nx}
    $N^\gamma_L(x)$ equals the sum of the weights of all the possible `modified' tilings on the $L$-star.
    A `modified' tiling means that the dominos are not allowed to cover positions $\braket{0,L-1}$ and a square has weight $x$ instead of $2x$ when covering position $\braket{0}$.
\end{lemma}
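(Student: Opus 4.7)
The plan is to prove the identity by induction on $L$, showing that the weighted count of modified tilings on the $L$-star satisfies exactly the same recursion and initial conditions as $N^\gamma_L(x)$. The key observation is that prohibiting dominoes from covering positions $\braket{0,L-1}$ effectively cuts the circle at the edge between positions $L-1$ and $0$, turning the $L$-star into a linear chain $\braket{0,1,\ldots,L-1}$ on which dominoes may only occupy consecutive positions $\braket{n,n-1}$ for $n=1,\ldots,L-1$.

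First I would let $M_n$ denote the sum of weights of modified tilings on the linear chain $\braket{0,1,\ldots,n-1}$ under the weight conventions of the lemma: the square on position $\braket{0}$ carries weight $x$, every other square carries weight $2x$, and the domino on $\braket{n,n-1}$ carries weight $-(1-it_n)(1+it_{n-1})$. The base cases $M_0=1$ (the empty tiling) and $M_1=x$ (the unique square of weight $x$ on position $\braket{0}$) then match $N^\gamma_0(x)=1$ and $N^\gamma_1(x)=x$ directly.

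For the inductive step with $n\geq 2$, I would condition on how the rightmost position $\braket{n-1}$ is covered. Either it hosts a square of weight $2x$, contributing $2x\,M_{n-1}$, or it is paired with position $\braket{n-2}$ by a domino of weight $-(1-it_{n-1})(1+it_{n-2})$, contributing $-(1-it_{n-1})(1+it_{n-2})\,M_{n-2}$. Summing these yields the recursion $M_n=2x\,M_{n-1}-(1-it_{n-1})(1+it_{n-2})\,M_{n-2}$, which is Eq.~\eqref{eq:N_recur} after reindexing $n\mapsto n-1$. Together with the matched initial data, induction then gives $M_n=N^\gamma_n(x)$ for every $n\leq L$, and in particular the desired $M_L=N^\gamma_L(x)$.

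The main subtlety I expect is purely bookkeeping: keeping the subscripts on $t_n$ in the domino weights aligned with the reindexed recursion, and ensuring that the special treatment of position $\braket{0}$ (square weight $x$ instead of $2x$) is preserved when the induction peels off positions from the right end rather than from the position-$0$ end. Once the identification of modified $L$-star tilings with linear-chain tilings is made precise, no further analytic input is required beyond this one-sided recursion argument, and the resulting combinatorial interpretation will be reused in Lemma~\ref{lem:combi_2Nx} to handle the full $L$-star with wrap-around dominoes.
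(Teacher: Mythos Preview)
Your proposal is correct and follows essentially the same approach as the paper: the paper likewise observes that forbidding the domino on $\braket{0,L-1}$ cuts the $L$-star into an $L$-line and then proves by induction, conditioning on whether the rightmost position $\braket{n}$ is covered by a square or by a domino on $\braket{n,n-1}$, that the weighted tiling count obeys the same recursion and initial conditions as $N^\gamma_n(x)$. Your anticipated reindexing $n\mapsto n-1$ and the remark about peeling from the right (so the special weight at position $\braket{0}$ is untouched) are exactly the bookkeeping points the paper handles implicitly.
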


\begin{proof}
In other words, we can cut the $L$-star between positions $\braket{0,L-1}$ to form a $L$-line with positions $\braket{L-1,\cdots,0}$.
Thus, it suffices to show the following claim.

\textbf{Claim:} $N^\gamma_L(x)$ counts the weights of all the possible tilings on a $L$-line.
Specifically, a tiling has all the positions covered by either square with weight $x$ at $\braket{0}$ or weight $2x$ at $\braket{n}$ for $n=1\sim(L-1)$,
or domino with weight $-(1-it_{n})(1+it_{n-1})$ at $\braket{n,n-1}$ for $n=1\sim(L-1)$.

Observe that any tiling on a $(n+1)$-line has either a square with weight $2x$ at $\braket{n}$, or a domino with weight $-(1-it_{n})(1+it_{n-1})$ at $\braket{n,n-1}$.
Thus, assuming the claim holds for $N^\gamma_n(x)$ and $N^\gamma_{n-1}(x)$,
the recursive relation $N^\gamma_{n+1}(x) =2x N^\gamma_{n}(x) -(1-it_n)(1+it_{n-1}) N^\gamma_{n-1}(x)$ implies that $N^\gamma_{n+1}(x)$ counts the weights of all the possible tilings on a $(n+1)$-line, so the claim also holds for $N^\gamma_{n+1}(x)$.
Finally, combing with $N^\gamma_0(x)=1$ and $N^\gamma_1(x)=x$, the claim holds by induction.
\end{proof}

We can now prove Lemma~\ref{lem:combi_2Nx} by multiplying the weight of each modified tiling in Lemma~\ref{lem:combi_Nx} by $2$.

\begin{proof}[Proof of Lemma~\ref{lem:combi_2Nx}]
From the combinatorial interpretation of $N^\gamma_L(x)$ as shown in Lemma~\ref{lem:combi_Nx},
we first divide all the possible modified tilings on the $L$-star into the following two types.
\begin{enumerate}
    \item[A.] Position $\braket{0}$ is covered by a square with weight $x$.
    \item[B.] Positions $\braket{0,1}$ are covered by a domino.
\end{enumerate}

If a modified tiling $T\in A$, then multiplying its weight by $2$ can be regarded as simply changing the weight of the square on position $\braket{0}$ from $x$ to $2x$.
Denote by $f(T)$ the obtained tiling on the $L$-star.

If a modified tiling $T\in B$, then $T$ is already a tiling on the $L$-star, and thus we cannot use the same technique as in $T\in A$.
Instead, we will add a new tiling $g(T)$ on the $L$-star, where $g$ is a reflection across the horizontal line passing position $\braket{0}$.
Specifically, $g$ reflects all the consisting squares and dominos of $T$ as follows:
it moves the square on position $\braket{n\neq 0}$ with weight $2x$ to position $\braket{L-n}$,
and moves the domino on positions $\braket{n,n-1}$ with weight $-(1-it_{n})(1+it_{n-1}) = -(1+it_{L-n})(1-it_{L-n+1})$ to positions $\braket{L-n,L-n+1}$,
where we've used $t_{L-n}=-t_n$.
Note that the $g(T)$ is a valid tiling on the $L$-star,
and that $g(T)$ has the same weight as $T$.
Furthermore, reflecting twice maps $T$ back to itself, thus $g^2=I$.

Denote by $C$ the set of all possible tilings on the $L$-star.
The above process shows that $f(A) \cup B \cup g(B) \subseteq C$.
The other direction of inclusion can be seen by the fact that any tiling $T\in C$ belongs to one and only one of the following three cases:
\begin{enumerate}
    \item Position $\braket{0}$ of $T$ is covered by a square with weight $2x$, then we can simply change the square's weight to $x$ and obtain a modified tiling $T'\in A$. Thus $T = f(T') \in f(A)$.
    \item Positions $\braket{1,0}$ of $T$ are covered by a domino, then $T\in B$.
    \item Positions $\braket{0,L-1}$ of $T$ are covered by a domino, then $g(T) \in B$.
    By $g^2 = I$ we have $T\in g(B)$.
\end{enumerate}
Therefore, $f(A) \cup B \cup g(B) = C$.
It remains to show that the  tilings in $f(A)$, $B$, and $g(B)$ are distinct.
First, these three sets do not intersect with each other because they consist of three different types of tilings as shown above.
Second, tilings in $f(A)$ and $g(B)$ are all distinct, since the invertible maps $f$ and $g$ are bijections and the modified tilings in $A$ and $B$ are distinct.
\end{proof}

% \subsection{Proof of $2N^\gamma_L(x) =2\gamma^L T_L(x/\gamma)$}\label{subsec:N_T_equals}
\subsubsection{Combinatorial interpretation of $2\gamma^L T_L(x/\gamma)$}

\begin{lemma}\label{lem:combi_Tgamma}

$2\gamma^L T_L(x/\gamma)$ has almost the same combinatorial interpretation of $2N^\gamma_L(x)$ as shown in Lemma~\ref{lem:combi_2Nx}, and the only difference is that the weight of dominos change from $-(1-it_{n})(1+it_{n-1})$ to $-\gamma^2$.

\end{lemma}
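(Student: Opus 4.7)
The plan is to deduce the interpretation directly from the classical Chebyshev tiling identity and then absorb the prefactor $\gamma^L$ into the tile weights. From Ref.~\cite{counting} (using the recurrence $T_{n+1}(y)=2yT_n(y)-T_{n-1}(y)$ on the $L$-bracelet with the appropriate initial values, in essentially the same style of argument that underlies Lemma~\ref{lem:combi_2Nx}), one has
\begin{equation*}
2T_L(y)=\sum_{T}(2y)^{s(T)}(-1)^{d(T)},
\end{equation*}
where $T$ ranges over all tilings of the $L$-star and $s(T),d(T)$ denote the numbers of squares and dominoes of $T$, respectively. This is the same $L$-star tiling model used in Lemma~\ref{lem:combi_2Nx}, except that the squares have uniform weight $2y$ and the dominoes have the uniform weight $-1$.

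Next I would substitute $y=x/\gamma$ and multiply by $\gamma^L$. The key combinatorial observation is the conservation law $s(T)+2d(T)=L$, since every square covers one position of the $L$-star and every domino covers two. Consequently
\begin{equation*}
\gamma^L=\gamma^{s(T)}\cdot(\gamma^2)^{d(T)},
\end{equation*}
so the prefactor $\gamma^L$ splits into one factor of $\gamma$ per square (exactly cancelling the $1/\gamma$ inside $2x/\gamma$) and one factor of $\gamma^2$ per domino (absorbed into the domino weight):
\begin{equation*}
\gamma^L\cdot(2x/\gamma)^{s(T)}(-1)^{d(T)}=(2x)^{s(T)}(-\gamma^2)^{d(T)}.
\end{equation*}
Summing over all tilings then yields
\begin{equation*}
2\gamma^L T_L(x/\gamma)=\sum_{T}(2x)^{s(T)}(-\gamma^2)^{d(T)},
\end{equation*}
which is precisely the generating function asserted by the lemma: the $L$-star is tiled with squares of weight $2x$ and dominoes of the new uniform weight $-\gamma^2$, so the only change from Lemma~\ref{lem:combi_2Nx} is the replacement of the position-dependent domino weight $-(1-it_n)(1+it_{n-1})$ by the uniform weight $-\gamma^2$.

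I expect no serious obstacle: the classical Chebyshev--tiling correspondence is imported from Ref.~\cite{counting}, and the remaining work is a one-line rebalancing of exponents enabled by $s(T)+2d(T)=L$. The substantive combinatorial content of this subsection has already been laid out in Lemma~\ref{lem:combi_2Nx}; here it suffices to retrace each tile's weight through the substitution $y\mapsto x/\gamma$ together with multiplication by $\gamma^L$.
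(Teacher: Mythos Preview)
Your proposal is correct and follows essentially the same route as the paper. The only cosmetic difference is that the paper obtains the base identity $2T_L(y)=\sum_T(2y)^{s(T)}(-1)^{d(T)}$ not by citing Ref.~\cite{counting} directly but by observing that $T_L(x)$ is the $\gamma=1$ specialization of $N^\gamma_L(x)$ (since $t_n=\sqrt{1-\gamma^2}\,t(n)$ vanishes), so Lemma~\ref{lem:combi_2Nx} already delivers it; the subsequent rescaling via $s(T)+2d(T)=L$ is identical to yours.
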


\begin{proof}
The Chebyshev polynomial of the first kind $T_n(x) =\cos(n \arccos x)$ has the following recursive relation:
\begin{align}
    T_0(x) &=1, T_1(x) =x, \\
    T_{n+1}(x) &= 2x T_n(x) -T_{n-1}(x) \label{eq:T_recurr}.
\end{align}
Comparing Eq.~\eqref{eq:T_recurr} with the recursive definition of $N^\gamma_L(x)$ in Eq.~\eqref{eq:N_recur},
and recalling that $t_n = \sqrt{1-\gamma^2} \times t(n)$,
we know $T_L(x)$ is a special case of $N^\gamma_L(x)$ with $\gamma=1$.
Thus $2T_L(x)$ has almost the same combinatorial interpretation of $2N^\gamma_L(x)$ as shown in Lemma~\ref{lem:combi_2Nx},
except that the weight of dominos change from $-(1-it_{n})(1+it_{n-1})$ to $-1$.
Then, by multiplying each of the $L$ positions in every tiling by $\gamma$ and substituting $x$ with $x/\gamma$,
we obtain the desired combinatorial interpretation of $2\gamma^L T_L(x/\gamma)$.
\end{proof}

\subsubsection{Comparing the coefficients}

To prove $2N^\gamma_L(x) =2\gamma^L T_L(x/\gamma)$, it suffices to show that the coefficients of $x^{n_s}$ where $n_s \in\{L,L-2,\cdots,1\}$ are the same for polynomials $2N^\gamma_L(x)$ and $2\gamma^L T_L(x/\gamma)$.
Comparing their combinatorial interpretations as shown by Lemma~\ref{lem:combi_2Nx} and Lemma~\ref{lem:combi_Tgamma},
we let $w =\sqrt{1-\gamma^2}$, and thus the weight of domino on positions $\braket{n,n-1}$ in $2N^\gamma_L(x)$ becomes $-(1-it(n)w)(1+it(n-1)w)$, where $t(n) = \tan(\frac{n}{L}\pi)$.
And the weight of domino in $2\gamma^L T_L(x/\gamma)$ becomes $-\gamma^2 =-(1-w)(1+w)$.
The weight of squares in both interpretations remains $2x$.
Comparing the total weights of tilings with $n_s$ squares (contributing to the coefficient of $x^{n_s}$), it suffices to show the following Lemma~\ref{lem:compare_w}.

\begin{lemma}\label{lem:compare_w}
The following two types of tilings on the $L$-star have the same total weights.
In both types of tilings, the weight of any square is $1$, and the number of dominos is $n_d =(L-n_s)/2 \in \{0,1,\cdots,l\}$, where $n_s \in\{L,L-2,\cdots,1\}$.
\begin{enumerate}
    \item[A.] The domino has weight $(1-it(n)w)(1+it(n-1)w)$ on positions $\braket{n,n-1}$, where $t(n) = \tan(\frac{n}{L}\pi)$.
    \item[B.] The domino has weight $(1-w)(1+w)$.
\end{enumerate}
\end{lemma}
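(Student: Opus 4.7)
The plan is to prove the polynomial identity in $w$,
\[
f_A(w) := \sum_{M : |M|=n_d} \prod_{\{n,n-1\} \in M}(1-it(n)w)(1+it(n-1)w) \;=\; N(L,n_d)(1-w^2)^{n_d} =: f_B(w),
\]
where both sides are polynomials in $w$ of degree at most $2n_d$. My device is a subset expansion: for each matching $M$, let $U(M), D(M)$ denote the upper- and lower-end vertices of its dominoes (so $V(M) = U(M) \sqcup D(M)$), and for each subset $X \subseteq \{0, \ldots, L-1\}$ expand by picking the ``$\pm it(n)w$'' term at every $n \in X$ and the ``$1$'' term elsewhere. This yields
\[
f_A(w) \;=\; \sum_X (iw)^{|X|} \Bigl(\prod_{n \in X} t(n)\Bigr) A(X), \qquad A(X) := \sum_{\substack{M : |M|=n_d \\ X \subseteq V(M)}} (-1)^{|X \cap U(M)|}.
\]
The identical expansion applied to $f_B$ (replacing each $(1 \pm it(n)w)$ by the position-independent $(1 \pm w)$, with $-w$ attached to upper and $+w$ to lower endpoints) gives $f_B(w) = \sum_X w^{|X|} A(X)$; matching against $N(L,n_d)(1-w^2)^{n_d}$ pins down the partial sums $\sum_{|X|=k} A(X) = (-1)^{k/2}\binom{n_d}{k/2}N(L,n_d)$ for even $k$ and $0$ for odd $k$. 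The lemma then reduces to the family of coefficient identities, for $0 \le k \le 2n_d$,
\[
i^k \sum_{|X|=k} A(X) \prod_{n \in X} t(n) \;=\; [w^k]\, N(L,n_d)(1-w^2)^{n_d}.
\]

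I would dispatch the low-order cases directly. For $k = 0$ both sides equal $N(L, n_d)$. For $k = 1$ the Type B side vanishes, matched on the Type A side because $A(\{n\}) = 0$ for every $n$: any $M$ containing $n$ uses either $\{n, n-1\}$ (with $n$ upper, contributing $-1$) or $\{n+1, n\}$ (with $n$ lower, contributing $+1$), and these two classes are in bijection (both restrict to matchings of size $n_d{-}1$ of the same path $P_{L-2}$ obtained by deleting the two-vertex stretch). For $k = 2$, after decomposition into adjacent and non-adjacent subsets $X$, the key input is $\sum_{n=0}^{L-1} t(n)t(n-1) = -L$, provable via $\tan A\tan B = (\cos(A-B) - \cos(A+B))/(\cos(A-B) + \cos(A+B))$ and the partial-fraction sum $\sum_{n=0}^{L-1} 1/(\cos(\pi/L) + \cos((2n-1)\pi/L)) = 0$. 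Together these settle the $n_d = 1$ version of the lemma in full.

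The main obstacle is the extension to all $k$ and $n_d$. My preferred route is a transfer-matrix reformulation: writing $f_A(w) = [y^{n_d}]\mathrm{tr}(T_0 T_1 \cdots T_{L-1})$ with $T_n = \begin{pmatrix} 1 & (1-it(n)w)(1+it(n-1)w)\, y \\ 1 & 0 \end{pmatrix}$ and $f_B(w) = [y^{n_d}]\mathrm{tr}(\tilde T^L)$ with $\tilde T = \begin{pmatrix} 1 & (1-w^2) y \\ 1 & 0 \end{pmatrix}$, the lemma becomes the trace identity $\mathrm{tr}(T_0 \cdots T_{L-1}) = \mathrm{tr}(\tilde T^L)$, which I would attempt to establish by constructing cyclic dressing matrices $S_n$ (with $S_L = S_0$) satisfying $T_n = S_n \tilde T S_{n+1}^{-1}$. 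The factorization of the edge weight $(1-it(n)w)(1+it(n-1)w)$ into contributions attached to the two endpoints of the edge makes such a dressing structurally plausible, but constructing $S_n$ explicitly---and in particular forcing the chain to close cyclically on the odd-length $L$-star---is the crux of the argument. If this matrix-theoretic route fails, my fallback is to use the polar form $1 \pm it(n)w = \sec(\phi_n/2)\, e^{\pm i\phi_n/2}$ (where $\phi_n = 2\arctan(w\,t(n))$, already appearing in Lemma~\ref{lem:chebyshev_copy}) to rewrite each $W_A(M)$ compactly and to verify by induction on $n_d$ that $f_A(w)$ vanishes to order $n_d$ at each of $w = \pm 1$; combined with the even-in-$w$ symmetry (coming from the $A(X)$-identities above), the value $f_A(0) = N(L, n_d)$, and the degree bound $\deg f_A \le 2n_d$, this would force $f_A = f_B$.
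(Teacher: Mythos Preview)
Your subset expansion and the definition of $A(X)$ are sound, and the $k\le 2$ cases are handled correctly. The problems are with both of your proposed routes for general $k$.

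\textbf{Transfer-matrix dressing cannot close cyclically.} If $T_n = S_n\tilde T S_{n+1}^{-1}$ with $S_L=S_0$, then taking determinants around the cycle forces $\prod_{n=0}^{L-1}\det T_n = (\det\tilde T)^L$, i.e.
\[
\prod_{n=0}^{L-1}\bigl(1+t(n)^2 w^2\bigr)\;=\;(1-w^2)^L,
\]
after cancelling $(-y)^L$ and telescoping the product $\prod_n(1-it(n)w)(1+it(n-1)w)$. But the left side is strictly positive at $w=1$ while the right side vanishes, so no such cyclic dressing exists. The trace identity $\mathrm{tr}(T_0\cdots T_{L-1})=\mathrm{tr}(\tilde T^L)$ is true (it is exactly the lemma summed over $n_d$), but it cannot be proved by the conjugation you propose.

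\textbf{The fallback is no easier than the lemma.} Showing that $f_A$ vanishes to order $n_d$ at $w=\pm 1$, together with evenness, degree $\le 2n_d$, and $f_A(0)=N(L,n_d)$, is equivalent to the full identity $f_A=f_B$. Your polar substitution at $w=1$ gives each domino weight the form $\sec(n\pi/L)\sec((n-1)\pi/L)\,e^{-i\pi/L}$, so
\[
f_A(1)=e^{-in_d\pi/L}\sum_M\prod_{\langle n,n-1\rangle\in M}\sec\!\tfrac{n\pi}{L}\sec\!\tfrac{(n-1)\pi}{L},
\]
and proving this (and its derivatives) vanish is a genuine cancellation statement in the secants. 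The suggested ``induction on $n_d$'' has no obvious recursive structure linking $f_A^{(n_d)}$ to $f_A^{(n_d-1)}$, so this is not yet a plan.

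\textbf{What the paper does instead.} The paper exploits exactly the cyclic invariance that your setup makes visible ($A(X+j)=A(X)$), but organises it at the level of tilings: each tiling $T$ is grouped with its $L$ rotates $T(0),\dots,T(L-1)$, so every term $\prod_{m=1}^k(-1)^{d_m} it(l_m)$ in the $w^k$-coefficient comes accompanied by its rotates $\prod_m(-1)^{d_m} it(l_m+j)$, $j\in[L]$. The whole lemma then collapses to the single tangent identity
\[
\sum_{j\in[L]}\prod_{m=1}^{k} it(l_m+j)=\delta(2\mid k)\,L
\]
for every $k$-subset $\{l_m\}\subseteq[L]$. This is proved by a backward induction on $k$ (base cases $k=L$ and $k=L-1$ via Vieta on the roots $it(n)$ of $\sum_k\binom{L}{k}\delta(2\nmid k)x^k$), the step using $\tan x-\tan y=\tan(x-y)(1+\tan x\tan y)$ to reduce $k$ to $k+2$. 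This is the missing idea in your proposal.
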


\begin{proof}
To show the total weights of tilings of type A and type B are the same, we will compare the coefficient of $w^k$ for $k\in\{1,2,\cdots, L\}$.
The coefficients of $w^0$, or the constant terms, are the same for both types and are both equal to the number of all possible tilings on the $L$-star with $n_d$ dominos.

The tilings of type A or type B can be categorized into groups of size $L$, where tilings in each group have the same distribution of dominos (or squares) up to rotations.
Specifically, for a tiling $T\in A$, denote by $T(j)\in A$ the new tiling obtained from shifting all the squares and dominos of $T$ by $j\in[L]:=\{0,1,\cdots,L-1\}$ positions,
and updating the weights of shifted dominos correspondingly so that $T(j)\in A$.
The group to which tiling $T\in A$ belongs is $\{T(j):j\in[L]\} \subseteq A$, since $T(0)=T$.
See Fig.~\ref{fig:5_terms} for an example of a $5$-sized group $\{T(j)\}_{j=0}^{4}$ of type A tilings on the $5$-star with $n_d=2$ dominos.

Denote by $T'\in B$ the counterpart of $T\in A$ obtained by changing the weights of dominos of $T$ from $(1-it(n)w)(1+it(n-1)w)$ to $(1-w)(1+w)$,
and $T'(j)\in B$ the tiling obtained by rotating $T'$ by $j$ positions.
Then, the group $\{T'(j):j\in[L]\} \subseteq B$ is a counterpart of $\{T(j):j\in[L]\} \subseteq A$.
Since each tiling $T$ belongs to one $L$-sized group ($T=T(0)\in\{T(j):j\in[L]\}$), and two $L$-sized group do not intersect (if two groups intersect, i.e. $T\in G_1\cap G_2$, then the two groups are the same: $G_1 =\{T(j):j\in[L]\} =G_2$), it suffices to prove that the coefficient of $w^k$ for $k\in\{1,2,\cdots, L\}$ in the total weights of each $L$-sized group $\{T(j):j\in[L]\} \subseteq A$ is the same as its counterpart group $\{T'(j):j\in[L]\} \subseteq B$.

The coefficient of $w^k$ in the total weights of a $L$-sized group $\{T(j):j\in[L]\} \subseteq A$ can be divided into different $L$-sized sets of terms ($L$-terms for short).
Specifically, if the coefficient of $w^k$ contains the term $\prod_{m=1}^{k} (-1)^{d_m} it(l_m)$ when calculating the weight of $T\in \{T(j):j\in[L]\}$,
where $\{l_m\}_{m=1}^k \subseteq [L]$ and $d_m\in\{0,1\}$ depends on the distribution of dominos in $T$,
then the coefficient of $w^k$ will contain $\prod_{m=1}^{k} (-1)^{d_m} it(l_m+j)$ for $j\in[L]$, by the definition of $T(j)$.
See Fig.~\ref{fig:5_terms} for an example of a $5$-terms $\{(-i)t(0+j)\cdot it(2+j)\}_{j=0}^{4}$ appearing in the coefficient of $w^2$ in the total weights of a $5$-sized group $\{T(j)\}_{j=0}^{4}$.
% $\left\{\prod_{m=1}^{k} (-1)^{d_m} it(l_m+j):j\in[L]\right\}$
% in the weight of $T(j)$.

We will later prove in Lemma~\ref{lem:sum_prod}
% in Section~\ref{subsec:sum_prod}
that each $L$-terms sum up to either $L$ or zero depending on whether $k$ is even or odd: 
\begin{equation}\label{eq:sum_prod_1}
\sum_{j\in [L]} \prod_{m=1}^{k} it(l_m+j) =\delta(2\vert k)L,
\end{equation}
where $\delta(2\vert k) = 1$ if $k$ is even, and $\delta(2\vert k) = 0$ if $k$ is odd.

When $k$ is even, note that when $\prod_{m=1}^{k} (-1)^{d_m} it(l_m+j)$ appears in the coefficient of $w^k$ in $T(j)$,
the corresponding term $\prod_{m=1}^{k} (-1)^{d_m}$ appears in the coefficient of $w^k$ in the counterpart $T'(j)$.
Equation (\ref{eq:sum_prod_1}) then implies $\sum_{j\in[L]} \prod_{m=1}^{k} (-1)^{d_m} it(l_m+j) =L\prod_{m=1}^{k} (-1)^{d_m}$,
saying that the sum of these $L$-terms in $\{T(j):j\in[L]\}$ is the same as the sum of corresponding $L$-terms in $\{T'(j):j\in[L]\}$.
Since the coefficient of $w^k$ in $\{T(j):j\in[L]\}$ or $\{T'(j):j\in[L]\}$ consists of different such $L$-terms,
the coefficient of $w^k$ in $\{T(j):j\in[L]\} \subseteq A$ is the same as that in $\{T'(j):j\in[L]\} \subseteq B$.

When $k$ is odd, Eq.~\eqref{eq:sum_prod_1} implies that the coefficient of $w^{k}$ in any $L$-sized group $\{T(j):j\in[L]\} \subseteq A$ is zero,
since the coefficient of $w^k$ in $\{T(j):j\in[L]\}$ consists of different such $L$-terms that sum up to zero.
The coefficient of $w^{k}$ in the total weights of tilings in the counterpart group $\{T'(j):j\in[L]\} \subseteq B$ is also zero,
because the weight of any tiling in the counterpart group $\{T'(j):j\in[L]\} \subseteq B$ does not contain odd powers of $w$,
which follows from the fact that its dominos have weight $(1-w^2)$.
\end{proof}

\begin{figure*}[htbp]
    \centering    \includegraphics[width=1\linewidth]{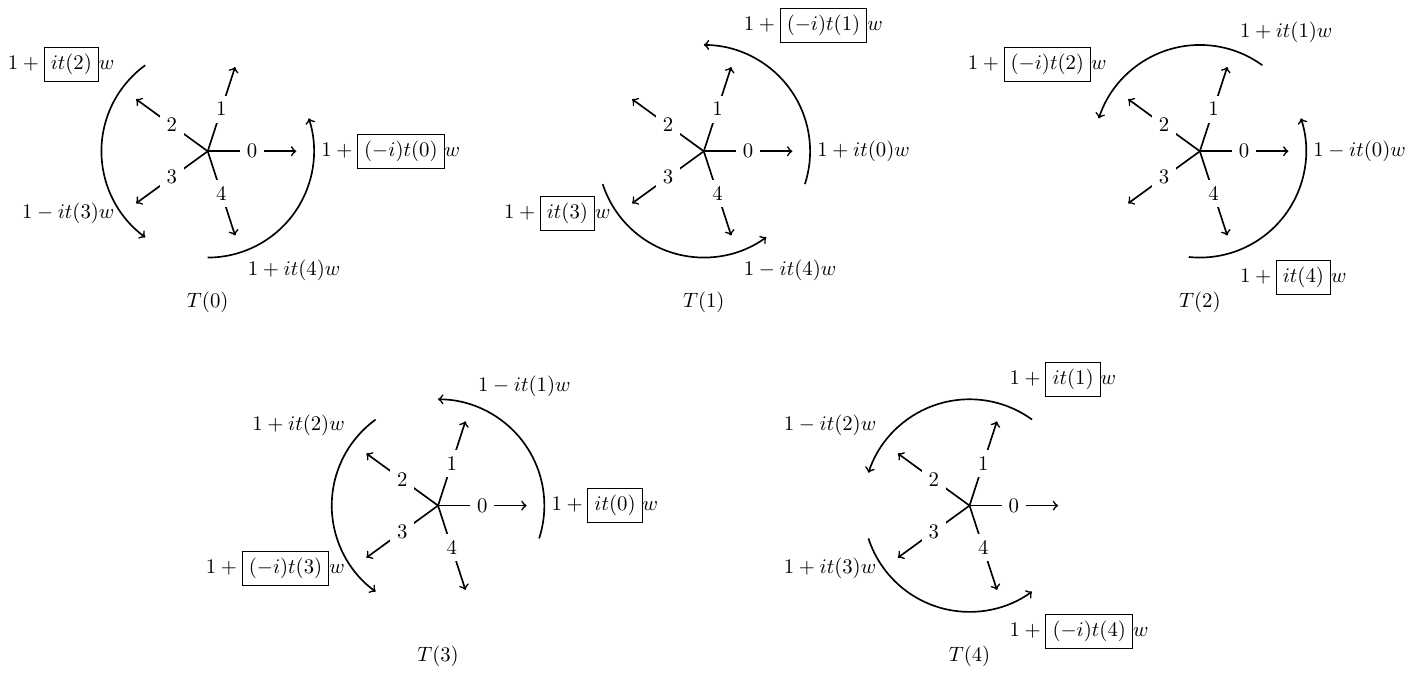}
    \caption{A group of type A (see Lemma~\ref{lem:compare_w} for its definition) tilings $\{T(j)\}_{j=0}^{4}$ on the $5$-star with $n_d=2$ dominos, where $T(j)$ is obtained from shifting all the dominos in $T(0)$ by $j$ positions and updating the weights of shifted dominos correspondingly. The set of products of the two boxed coefficients in each tiling, i.e. $\{(-i)t(0+j)\cdot it(2+j)\}_{j=0}^{4}$, is a $5$-sized set of terms ($5$-terms) appearing in the coefficients of $w^2$ in the total weights of the group $\{T(j)\}_{j=0}^{4}$, and this $L$-terms sum up to $5$ by Eq.~\eqref{eq:sum_prod_1}.}
    \label{fig:5_terms}
\end{figure*}

% \subsubsection{Proof of Eq. (\ref{eq:sum_prod_1})}\label{subsec:sum_prod}

The following lemma regards a specific sum of products of tangents, which may be of independent interest.
% (claimed earlier in Eq. (\ref{eq:sum_prod_1}))

\begin{lemma}\label{lem:sum_prod}
Assume the integer $L\geq 3$ is odd.
Denote $t(n) := \tan(n\pi/L)$.
Note that $t(n) = t(n\,\mathrm{mod}\,L)$.
For $k\in\{1,\cdots,L\}$, consider any $k$-sized subset $\{l_m\}_{m=1}^k \subseteq [L] :=\{0,1,\cdots,L-1\}$, then
\begin{equation}\label{eq:sum_prod_2}
\sum_{j\in [L]} \prod_{m=1}^{k} it(l_m+j) = \delta(2\vert k)L,
\end{equation}
where $\delta(2\vert k) \in\{0,1\}$ indicates whether $k$ is odd or even.
\end{lemma}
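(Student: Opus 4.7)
My plan is to repackage the sum as a single rational function evaluated on the $L$-th roots of unity, and then exploit the oddness of $L$ to trivialise it. The starting point will be the half-angle identity
\[ it(n) \;=\; i\tan(n\pi/L) \;=\; \frac{\zeta^n - 1}{\zeta^n + 1}, \qquad \zeta := e^{2\pi i/L}. \]
Substituting $w = \zeta^j$ turns the left-hand side of Eq.~\eqref{eq:sum_prod_2} into
\[ \sum_{j\in[L]}\prod_{m=1}^{k} it(l_m+j) \;=\; \sum_{w^L=1} F(w), \qquad F(z) \;:=\; \prod_{m=1}^{k} \frac{z\zeta^{l_m}-1}{z\zeta^{l_m}+1}. \]

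The central step I will use is to multiply by $z^L+1$ and form the auxiliary function $G(z) := (z^L+1)\,F(z)$. The claim I need is that $G$ is in fact a polynomial of degree $L$. Indeed, the denominator of $F$ has $k$ simple zeros at $z_m = -\zeta^{-l_m}$, and because $L$ is odd I get $z_m^L = -\zeta^{-l_m L} = -1$, so each $z_m$ is also a root of $z^L+1$. Hence $\prod_m(z\zeta^{l_m}+1)$ divides $z^L+1$ as polynomials, cancelling all poles of $F$ and producing a genuine polynomial; its degree equals $L$ because $F(z) \to 1$ as $z \to \infty$.

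From there, only two coefficients of $G(z) = \sum_{n=0}^{L} a_n z^n$ remain to be extracted. I will read off $a_0 = G(0) = F(0) = \prod_m(-1) = (-1)^k$ and $a_L = \lim_{z\to\infty} G(z)/z^L = 1$. Next, whenever $w^L = 1$ one has $w^L + 1 = 2$, so $G(w) = 2\,F(w)$ on the summation set. Combining these with the root-of-unity orthogonality $\sum_{w^L=1} w^n = L\,\delta(L\mid n)$ yields
\[ \sum_{w^L=1} F(w) \;=\; \tfrac{1}{2}\sum_{w^L=1} G(w) \;=\; \tfrac{L}{2}\bigl(a_0+a_L\bigr) \;=\; \tfrac{L}{2}\bigl(1+(-1)^k\bigr) \;=\; \delta(2\mid k)\,L, \]
which is exactly Eq.~\eqref{eq:sum_prod_2}.

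The only nontrivial insight is the multiplication by $z^L+1$: after that, the whole argument reduces to computing one constant term and one leading coefficient of an explicit polynomial. The oddness of $L$ is used in exactly one place, the identity $z_m^L = -1$, and this is precisely what lets $z^L+1$ absorb the denominator of $F$; if $L$ were even, $z_m^L$ would equal $+1$ instead and the trick would collapse, in agreement with the statement becoming false.
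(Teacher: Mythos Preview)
Your proof is correct and takes a genuinely different route from the paper's. The paper argues by \emph{downward} induction on $k$: the base case $k=L$ is trivial (each product contains the factor $t(0)=0$), while the base case $k=L-1$ is obtained by expanding $(1+i\tan x)^L=\cos^{-L}(x)\,e^{iLx}$, noting that the imaginary part vanishes at $x=n\pi/L$, and applying Vieta's formulas to the resulting polynomial with roots $\{it(n)\}_{n=0}^{L-1}$. The inductive step from $k+1,k+2$ down to $k$ uses the identity $\tan x-\tan y=\tan(x-y)\bigl(1-(i\tan x)(i\tan y)\bigr)$ to relate the size-$k$ sum to the size-$(k+2)$ sum. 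Your approach bypasses the induction, Vieta's formulas, and the tangent subtraction identity altogether: the substitution $it(n)=(\zeta^{n}-1)/(\zeta^{n}+1)$ together with the multiplication by $z^{L}+1$ reduces the whole lemma to reading off the constant and leading coefficients of an explicit degree-$L$ polynomial. Your argument is shorter and makes the role of the parity of $L$ transparent (it is exactly the identity $(-\zeta^{-l})^{L}=-1$ that lets $z^{L}+1$ swallow the denominator); the paper's proof, on the other hand, exposes a recursive link between the cases $k$ and $k+2$ that your method hides. Note that your proof silently uses the hypothesis that the $l_m$ are \emph{distinct} to guarantee that the poles of $F$ are simple and hence cancelled by $z^{L}+1$; this is indeed part of the lemma's assumptions, but is worth stating explicitly.
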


\begin{proof}
We will prove Lemma~\ref{lem:sum_prod} by induction.

\textbf{Base case I.} When $k=L$, we have $\prod_{m=1}^{L} t(l_m +j)=0$,
since $\{(l_m+j)\,\mathrm{mod}\,L\}_{m=1}^{L} = [L]$ for any $j\in[L]$ and $t(0)=0$.
Thus, Eq.~\eqref{eq:sum_prod_2} holds since $L$ is odd.

\textbf{Base case II.} When $k=L-1$, suppose $\{l_m\}_{m=1}^{L-1} = [L]\setminus \{l_0\}$, then $\{(l_m +j)\,\mathrm{mod}\,L\}_{m=1}^{L} = [L]\setminus \{(j+l_0)\,\mathrm{mod}\,L\}$.
Note that $\{(j+l_0)\,\mathrm{mod}\,L\}_{j\in[L]} =[L]$, thus Eq.~\eqref{eq:sum_prod_2} becomes:
\begin{equation}
    \sum_{\{d_m \} \subset [L] \atop |\{d_m\}| = L-1} \prod_{m = 1}^{k} it(d_m) = \delta(2\vert k)L,
\end{equation}
which is a special case of the following equality with $k=L-1$.
\begin{equation}\label{eq:vieta_dm}
\sum_{\{d_m \} \subset [L] \atop |\{d_m\}| = k} \prod_{m = 1}^{k} it(d_m) = \delta(2\vert k)
\left(\begin{array}{@{}c@{}}
L \\
k
\end{array}\right)
\end{equation}

We now prove Eq.~\eqref{eq:vieta_dm}.
Using Euler's formula $e^{ix} = \cos{x} +i\sin{x}$,
we have $e^{iLx} = (e^{ix})^L = \big( 1 +i\tan{x} \big)^L / \cos^L{x}$.
Expanding the latter binomial, we have:
\begin{equation}\label{eq:e_Lx}
\cos^L(x) e^{iLx} = \sum_{k=0}^{L}
\left(\begin{array}{@{}c@{}}
L \\
k
\end{array}\right)
(i\tan x)^k
\end{equation}
When $x\in\{\frac{n}{L}\pi\}_{n=0}^{L-1}$,
the imaginary part of $e^{iLx}=e^{in\pi}$ is zero,
and the imaginary part of $(i\tan(x))^k$ can be written as $\delta(2\vert k)(it(n))^k$.
Thus, Eq.~\eqref{eq:e_Lx} implies:
\begin{equation}
0 = \sum_{k=0}^{L} \delta(2\vert k)
\left(\begin{array}{@{}c@{}}
L \\
k
\end{array}\right)
(it(n))^k.
\end{equation}
Therefore, the polynomial $\sum_{k=0}^{L} \delta(2\vert k)
\left(\begin{array}{@{}c@{}}
L \\
k
\end{array}\right)
x^k$ has $L$ different zeros $\{it(n)\}_{n=0}^{L-1}$.
Using Vieta's Theorem or zero-coefficient relationship, we obtain Eq.~\eqref{eq:vieta_dm}.

\textbf{Induction step.} Assume Eq.~\eqref{eq:sum_prod_2} holds for $k+2,k+1$,
we now show that Eq.~\eqref{eq:sum_prod_2} also holds for $k \leq L-2$.

For any subset $\{l_m\}_{m=1}^k \subseteq [L]$ of size $k$,
since $k\leq L-2$,
we can find two different $l_{k+1},l_{k+2} \in [L]\setminus \{l_m\}_{m=1}^k$ such that $\{l_m\}_{m=1}^{k+2} \subseteq [L]$.
Since subsets $\{l_m\}_{m=1}^{k} \cup \{l_{k+1}\}$ and $\{l_m\}_{m=1}^{k} \cup \{l_{k+2}\}$ are both of size $k+1$,
using the induction of hypothesis for $k+2$, we have:
\begin{align}
0 &= \sum_{j\in[L]} it(l_{k+2}+j) \prod_{m=1}^{k} it(l_m+j)
\nonumber\\&\quad
-\sum_{j\in[L]} it(l_{k+1}+j) \prod_{m=1}^{k} it(l_m+j) \\
% &=& \sum_{j\in[L]} i\big( t(l_{k+2}+j)- t(l_{k+1}+j) \big) \prod_{m=1}^{k} it(l_m+j) \label{eq:induction_line2}\\
&= it(l_{k+2}-l_{k+1}) \sum_{j\in[L]}\prod_{m=1}^{k} it(l_m+j)
\nonumber\\&\quad
-it(l_{k+2}-l_{k+1}) \sum_{j\in[L]}
\prod_{m=1}^{k+2} it(l_m+j), \label{eq:induction_line3}
% \big( 1 - it(l_{k+2}+j) \times it(l_{k+1}+j) \big) \prod_{m=1}^{k} it(l_m+j),
\end{align}
where we used the trigonometric identity $\tan(x)-\tan(y) =\tan(x-y)\left( 1 -i\tan(x) i\tan(y) \right)$ in Eq.~\eqref{eq:induction_line3}.
Since $l_{k+2} \neq l_{k+1}$,
the fact that Eq.~\eqref{eq:induction_line3} equals zero implies:
\begin{align}
\sum_{j\in[L]} \prod_{m=1}^{k} it(l_m+j) &= \sum_{j\in[L]} \prod_{m=1}^{k+2} it(l_m+j)\\
&= \delta(2\vert k+2)L \label{eq:kplus2_line2}\\
&= \delta(2\vert k)L \label{eq:kplus2_line3},
\end{align}
where we used the induction of hypothesis for $k+1$ in Eq.~\eqref{eq:kplus2_line2},
and the fact that $k+2$ has the same parity as $k$ in Eq.~\eqref{eq:kplus2_line3}.
Thus, Eq.~\eqref{eq:sum_prod_2} also holds for $k$.
\end{proof}

\section{Conclusions}
In this article, we have reviewed the fixed-point quantum search that overcomes the souffle problem while maintaining the quadratic speedup, as it always finds a marked state with high probability as long as the proportion of marked states is greater than a predetermined lower bound.
The closed-form angle parameters in the fixed-point quantum search rely on a lemma regarding the explicit formula of recursive quasi-Chebyshev polynomials, but its proof is not given explicitly in the original paper.
In this work, we have provided a detailed proof of this quasi-Chebyshev lemma, thus providing a sound foundation for the correctness of the fixed-point quantum search.

To prove the quasi-Chebyshev lemma, we have used nontrivial techniques and tools such as combinatorial interpretations of quasi-Chebyshev polynomials as counting weights of tilings on the $L$-star, Euler's formula, trigonometric identities of the tangent function, binomial Theorem, Vieta's Theorem, and mathematical induction.
It's natural to wonder how the authors found the closed-form angle parameters in the first place.
The lemma may be of independent interest, as it has been a key component in overcoming the souffle problem of quantum walk search on complete bipartite graphs, and it is also central to the quantum phase discrimination algorithm with applications to quantum search on graphs.
It will be interesting to find more applications of the quasi-Chebyshev lemma.

\section*{Acknowledgement}
This work was supported by the National Key Research and Development Program of China (Grant No.2024YFB4504004), the National Natural Science Foundation of China (Grant No. 92465202, 62272492, 12447107),  the Guangdong Provincial Quantum Science Strategic Initiative (Grant No. GDZX2303007, GDZX2403001), the Guangzhou Science and Technology Program (Grant No. 2024A04J4892).

\section*{Competing Interest}
The authors declare that they have no competing interests or financial conflicts to disclose.

% \section*{Appendixes~(if needed)}

% \subsection*{Appendix A}

% This is Appendix A.

% \subsection*{Appendix B}

% This is Appendix B.

\bibliographystyle{fcs}
\bibliography{ref}

\begin{biography}{photo_LGZ}
Guanzhong Li received his BS degree in mathematics from Sun Yat-sen University, China in 2021. He is currently pursuing his PhD degree in computer science at Sun Yat-sen University, China. His research interests include quantum computing, quantum algorithms, quantum walks.
\end{biography}

\begin{biography}{photo_FSG}
    Shiguang Feng received the B.S. degree in computer science and technology from Shandong Agricultural University, Tai'an, China, in 2006; the Ph.D. degree in logic from Sun Yat-sen University, Guangzhou, China, in 2012; and the Doctor of Natural Science degree in computer science from Leipzig University, Leipzig, Germany, in 2016. He is currently an associate researcher with the School of Computer Science and Engineering, Sun Yat-sen University, Guangzhou, China. His current research interests include reversible logic synthesis, quantum algorithms and mathematical logic.
\end{biography}

\begin{biography}{photo_LLZ}
Lvzhou Li received his PhD degree in Computer Science from Sun Yat-sen University, China in 2009 and then worked in Sun Yat-sen University, China. Now he is a professor of the School of Computer Science and Engineering, Sun Yat-sen University, China. His research interests are quantum algorithm,  quantum circuit synthesis and optimization, and quantum machine learning.
\end{biography}

\end{document}